\def\thefigure{\thesection.\@arabic\c@figure}
\def\fps@figure{h, t}
\def\thetable{\thesection.\@arabic\c@table}
\def\fps@table{h, t}
\begin{document}

\newtheorem{theorem}{Theorem}[section]
\newtheorem{definition}[theorem]{Definition}
\newtheorem{lemma}[theorem]{Lemma}
\newtheorem{remark}[theorem]{Remark}
\newtheorem{proposition}[theorem]{Proposition}
\newtheorem{corollary}[theorem]{Corollary}
\newtheorem{example}[theorem]{Example}

\def\balpha{\boldsymbol\alpha}
\def\bom{\boldsymbol\omega}
\def\bOm{\boldsymbol\Omega}
\def\bPi{\boldsymbol\Pi}
\def\below#1#2{\mathrel{\mathop{#1}\limits_{#2}}}



\title{Selective decay by Casimir dissipation in inviscid fluids}

\author{
Fran\c{c}ois Gay-Balmaz$^{1}$ and Darryl D. Holm$^{2}$
}
\addtocounter{footnote}{1}
\footnotetext{Laboratoire de M\'et\'eorologie Dynamique, \'Ecole Normale Sup\'erieure/CNRS, Paris, France. 
\texttt{gaybalma@lmd.ens.fr}
\addtocounter{footnote}{1} }
\footnotetext{Department of Mathematics, Imperial College, London SW7 2AZ, UK. 
\texttt{d.holm@ic.ac.uk}
\addtocounter{footnote}{1}}

\date{
\small PACS Numbers: 
Geometric mechanics, 02.40.Yy; 
\\Hamiltonian and Lagrangian mechanics, 45.20.Jj, 47.10.Df; 
\\Fluids, mathematical formulations, 47.10.A- }

\maketitle

\makeatother

\begin{abstract}
The problem of parameterizing the interactions of larger scales and smaller scales in fluid flows is addressed by considering a property of two-dimensional incompressible turbulence. The property we consider is selective decay, in which a Casimir of the ideal formulation (enstrophy in 2D flows, helicity in 3D flows) decays in time, while the energy stays essentially constant. 
This paper introduces a mechanism that produces selective decay by enforcing Casimir dissipation in fluid dynamics. This mechanism turns out to be related in certain cases to the numerical method of anticipated vorticity discussed in \cite{SaBa1981,SaBa1985}. 
Several examples are given and a general theory of selective decay is developed that uses the Lie-Poisson structure of the ideal theory. A scale-selection operator allows the resulting modifications of the fluid motion equations to be interpreted in several examples as parameterizing the nonlinear, dynamical interactions between disparate scales. The type of modified fluid equation systems derived here may be useful in modelling  turbulent geophysical flows where it is computationally prohibitive to rely on the slower, indirect effects of a realistic viscosity, such as in large-scale, coherent, oceanic flows interacting with much smaller eddies.
\end{abstract}

\maketitle

\newpage

\tableofcontents



\section{Introduction}

\subsection{Symmetries and selective decay}

From the viewpoint of Noether's theorem, energy is conserved in ideal fluid dynamics because of the time-translation symmetry of the Lagrangian in Hamilton's principle for ideal fluid motion. A second type of fluid conservation law arises via Noether's theorem because of relabelling symmetry of the Lagrangian. Relabelling symmetries smoothly transform the labels of the fluid parcels without changing the Eulerian quantities on which Hamilton's principle for ideal fluids depends. 
The conservation laws associated with \emph{relabelling symmetries} are called Casimirs, because in the Hamiltonian formulation of ideal fluid dynamics in the Eulerian representation their Lie-Poisson brackets with \emph{any} other functionals vanish identically.
Thus, the Casimirs arise from a continuous symmetry of the Eulerian representation of ideal fluid dynamics. This relabelling symmetry is also responsible for Kelvin's circulation theorem in ideal fluid dynamics, which immediately leads to the conservation of the Casimirs for 2D ideal incompressible flow, see \cite{HMR1998,CoHo2012}. The Kelvin circulation theorem and the associated Casimir conservation laws are \emph{kinematic}, because they hold for any choice of 
Hamiltonian in the Eulerian representation. 
Energy conservation is \emph{dynamic}. 

The two types of conservation laws for ideal fluids (energy and Casimirs) have been used together previously, in particular, as the basis for the energy-Casimir method of nonlinear stability analysis of equilibrium flows \cite{HMRW1985}. This method is based on the concept of \emph{dynamical balance} in ideal fluids. In particular, the equilibrium solutions of ideal fluid dynamics are characterized by such a balance.  Namely, the critical points $\delta(h+\lambda C)=0$ of any linear combination $h+\lambda C$ of the energy $h$, a Casimir $C$ and constant $\lambda$ occur for 
equilibrium solutions of Euler's equations for ideal fluids. 
An initial disturbance of this balance may be either stable or unstable in the Lyapunov sense. Sufficient conditions for linear Lyapunov stability of an equilibrium flow may be determined by taking the second variation of the energy-Casimir sum, $\delta^2(h+\lambda C)|_e$, where subscript $e$ means evaluated at the equilibrium flow. This second variation is \emph{conserved} by the linearized equations. This means that linear Lyapunov stability is implied for those flow equilibria for which the second variation $\delta^2(h+\lambda C)|_e$ comprises a norm in the space of perturbations. Thus, in the ideal Hamiltonian description of fluids, the equilibrium solutions and their Lyapunov stability are determined by the interplay of energy and Casimirs.
For more details, and discussions of the extension of the energy-Casimir method to \emph{nonlinear} Lyapunov stability analysis, see  \cite{HMRW1985} and references therein, particularly \cite{Ar1965a,Ar1965b} for the case of 2D ideal incompressible fluid flows. 

\paragraph{The selective decay hypothesis in 2D turbulence.}
Already in \cite{Fj1953} it was shown that, in contrast to 3D incompressible flow, the existence of two constants of motion (energy and Casimir) implies that any transfer of kinetic energy to higher wavenumbers $(k)$ in the 2D incompressible flow of a viscous fluid must be accompanied by a larger transfer of energy to lower wavenumbers. Over time, this inverse cascade property leads to \emph{coarse-graining}. That is, it leads to the emergence of large-scale features of the flow. An explanation for this effect based on statistical mechanics and equilibrium thermodynamics was offered in \cite{Kr1967,Kr1971}, which predicted  a forward cascade $(k^{-3})$ of enstrophy and an inverse cascade $(k^{-5/3})$ of kinetic energy in 2D incompressible turbulence, as reviewed in \cite{KrMo1980}.  Investigating  these predictions numerically, \cite{MaMo1980} performed simulations of 2D incompressible turbulence using the Navier-Stokes equations and on this basis postulated the \emph{selective decay} hypothesis. Their selective decay hypothesis was that in the 2D incompressible flow of a viscous fluid the squared $L^2$ norm of vorticity (or enstrophy, the quadratic Casimir) should decay in time, while the squared $L^2$ norm of velocity (kinetic energy) which has contributions from both directions in wavenumber should stay essentially constant.
The balance between the $(k^{-5/3})$ and $(k^{-3})$ cascades occurs at the length scale at which  energy flux passes through zero, so that energy may be passed in both directions at the same rate simultaneously. 

Because the balance induced by the energy-Casimir dynamics is so important in the motion of fluids, one could imagine designing \emph{modified fluid equations} that would govern the balance by \emph{imposing} the selective decay of one or the other of these two types of conservation laws. 
Moreover, because this balance in the ideal case stems from the Lie-Poisson bracket in the Hamiltonian formulation of ideal fluid dynamics, one could imagine that an effective approach in deriving such modified equations would involve using elements of the Lie-Poisson bracket that would select which type of conservation law would dominate and thus govern the properties of the solution. 

The two types of ideal fluid constants of motion, energy and Casimirs, typically have quite different dependences on spatial gradients of the solutions. Consequently, the interplay between them can be interpreted as an interaction between larger and smaller scales of excitation, as specified by coherence lengths, or spectral wavenumbers, etc. Thus, imposition of a selective decay mechanism for one or the other type could in principle control the dominance of one set of scales, or sizes of excitations, over another. 
Hence, modifying the fluid equations to  impose selective decay of either the energy or a particular Casimir is interesting, because such a modification could control the direction of the energy cascade, i.e., whether it is forward (toward smaller, finer scales), or backward (toward larger, coarser scales).
This means that modifications of the fluid equations that promote selective decay could control the dominant sizes that emerge in the flow and support their stability, for example, by causing smaller perturbations to decay away.
Such modifications would control the rate at which the cascade occurs, in either direction, and thus would control the type of solution that emerges from a given initial condition. This sort of control could conceivably even be used to regularise the solution by choosing to diminish smaller-scale features and thereby promote the emergence of larger-scale features of the flow. 

Various types of modifications of the Poisson bracket for Hamiltonian systems have previously been proposed that produce energy dissipation. These proposed modifications include: (i) adding a symmetric bilinear form, \cite{Ka1984,Mo1984,Gr1984,Ot2005}), and (ii) application of a double bracket, \cite{Br1991,BKMR1996,HoPuTr2008,BrElHo2008}. 
See also \cite{VaCaYo1989}, in which a modification of the transport velocity was used to impose energy dissipation with fixed Casimirs. 

Here we are not considering energy dissipation. Instead, we are considering a type of selective decay of ideal Casimir constants of motion at fixed energy, as was discussed for 2D turbulence cascades in \cite{MaMo1980}. In the present case, the selective decay is enforced by a modification of the vorticity equation that is based on the Lie-Poisson structure of its Hamiltonian formulation in the ideal case  \cite{Ar1966,Ar1969,Ar1978,HMR1998}. 

Thus, the aim of this paper is to impose selective decay by using the Lie-Poisson structure of the ideal theory, and interpret the resulting modifications of the equations as a means of dynamically and nonlinearly parameterizing the interactions between disparate scales directly. This type of modification may be useful in situations where it is computationally prohibitive to rely on the slower, indirect effects of viscosity.  
Moreover, these ideas should apply not only to 2D flow, but also to any 3D flow that is sufficiently anisotropic, particularly in geophysical fluid flows in thin domains.

\paragraph{Plan of the paper.}

The remainder of this introductory section provides more background about the two types of conservation laws in fluids and defines our selective decay mechanism based on Casimir dissipation in the example of 2D incompressible flows. 
Remarkably, in that 2D case,
the mechanism for selective decay by Casimir dissipation in 2D incompressible fluid dynamics turns out to be related to the numerical method of \emph{anticipated vorticity} in \cite{SaBa1981,SaBa1985}. Section \ref{GenTheory-sec} discusses the general theory of selective decay by Casimir dissipation in the Lie algebraic context that underlies the Lie-Poisson Hamiltonian formulation of ideal fluid dynamics, as explained in, e.g., \cite{HMR1998}. In particular, we develop the Kelvin circulation theorem and Lagrange-d'Alembert variational principle for Casimir dissipation. 
In the Lagrange-d'Alembert formulation, the modification of the motion equation to impose selective decay is seen as an energy-conserving constraint force, something like the Lorentz force.
Section \ref{semidirect-sec} extends the Casimir dissipation theory to include fluids that possess advected quantities such as heat, mass, buoyancy, magnetic field, etc., by using the standard method of Lie-Poisson brackets for semidirect-product actions of Lie groups on vector spaces reviewed in \cite{HMRW1985}. Here, the main examples are the rotating shallow water equations  and the 3D Boussinesq equations for rotating stratified incompressible fluid flows. Finally, section \ref{conclusion-sec} summarizes our conclusions and suggests some possible next directions for applications of the selective decay modifications of fluid equations, treated as dynamical parameterizations of the interactions between disparate scales.

\subsection{2D flows}
An interesting feature of ideal incompressible 2D fluid flows is that they have two types of conservation laws which arise from their Hamiltonian formulation in terms of a Lie-Poisson bracket, $\{\,\cdot\,,\,\cdot\,\}$, \cite{Ar1966,Ar1969,Ar1978}, as
\begin{equation}
\frac{df( \omega )}{dt}
=\{f,h\}( \omega )
= \left\langle \omega , \left[ \frac{\delta f}{\delta \omega }, \frac{\delta h}{\delta \omega }\right] \right\rangle  
:= \int_\mathcal{D} \omega \left[ \frac{\delta f}{\delta \omega }, \frac{\delta h}{\delta \omega }\right] \,dxdy\,,
\label{ArnoldLPB}
\end{equation}
where $\omega$ is the (scalar) vorticity of the 2D flow, the square bracket $[\,\cdot\,,\,\cdot\,]$ is the 2D Jacobian, written in various forms as 
\begin{equation}
[f,h]=J(f,h)=f_xh_y-h_xf_y
=\mathbf{\hat{z}}\cdot\nabla f\times\nabla h
=\mathbf{\hat{z}}\cdot 
{\rm curl}\,(f\mathbf{\hat{z}}) \times {\rm curl}\,(h\mathbf{\hat{z}})
\,,\label{brkt-idnty}
\end{equation}
and the angle bracket $\langle\,\cdot\,,\,\cdot\,\rangle$ in (\ref{ArnoldLPB}) is the $L^2$ pairing in the domain $\mathcal{D}$ of the $(x,y)$ plane. For convenience, we shall take the domain $\mathcal{D}$ to be periodic, so we need not worry about boundary terms arising from integrations by parts.

\paragraph{Energy.} The first type of conservation law for such 2D ideal fluid flows arises from the antisymmetry of their Lie-Poisson bracket. Namely, antisymmetry of the bracket implies conservation of the Hamiltonian function $h$
\[
\frac{dh( \omega )}{dt}
=\{h,h\}( \omega )
= 0
\,,
\]
for any given choice of $h$.

\paragraph{Casimirs.} The second type of conservation law arises because their Lie-Poisson bracket has a kernel (i.e., is degenerate), which means there exist functions $C( \omega )$ for which 
\begin{equation}
\frac{dC( \omega )}{dt}
=\{C,h\}( \omega ) = 0
\,,
\label{Casimir-def}
\end{equation}
for any Hamiltonian $h(\omega)$.
Functions that satisfy this relation for any Hamiltonian are called \emph{Casimir functions}. (Lie called them distinguished functions, according to \cite{Ol2000}.)

\begin{theorem}[\cite{Ar1966,Ar1969,Ar1978}]\rm
The Casimirs for the Lie--Poisson bracket (\ref{ArnoldLPB}) in the Hamiltonian formulation of 2D incompressible ideal fluid motion are
\[
C_\Phi = \int_\mathcal{D} \Phi(\omega)\,dxdy
\,,\]
for any smooth function $\Phi$.
\end{theorem}

\begin{proof}
The well-known proof of this statement follows from an identity for the $(x,y)$ Jacobian bracket $[\,\cdot\,,\,\cdot\,]$ that arises from integration by parts as
\begin{equation}
\int_\mathcal{D} a[b,c]\,dxdy
= \int_\mathcal{D} b[c,a]\,dxdy
= \int_\mathcal{D} c[a,b]\,dxdy
\,.
\label{2D-integparts-id}
\end{equation}
This identity holds for any smooth functions $a,b,c\in\mathcal{F}(\mathcal{D})$,
provided the boundary terms of the domain $\mathcal{D}$ do not contribute; for example, when the domain $\mathcal{D}$ is taken to be periodic. The boundary terms arising in integration by parts will be discussed further in a moment.
Thus, 
\[
\{C_\Phi,h\}( \omega )
= \int_\mathcal{D} 
\omega \left[ {\Phi}' (\omega), \frac{\delta h}{\delta \omega }\right] \,dxdy
= \int_\mathcal{D} 
\frac{\delta h}{\delta \omega } \left[\omega, {\Phi}' (\omega) \right] \,dxdy
=
0
\,,
\]
for any Hamiltonian $h$ and function $\Phi(\omega)$. 
\end{proof}

\paragraph{Enstrophy.} Among the Casimirs $C_\Phi(\omega)$ for 2D ideal fluids is the famous \emph{enstrophy}, given by
\begin{equation}
C_2 := 
\frac12 \|\omega\|_{L^2(\mathcal{D})}^2 = \frac12 \int_\mathcal{D} \omega^2\,dxdy
\,,
\label{enstrophy-def}
\end{equation}
which is the Casimir of the Lie-Poisson bracket (\ref{ArnoldLPB}) for the case that $\Phi(\omega)=\frac12\omega^2$. 

\begin{remark}\rm
Any smooth function of a Casimir (such as its square) is also a Casimir. 
\end{remark}

\paragraph{Turbulence and selective decay.}
When viscosity is added, the Navier-Stokes equations for 2D incompressible turbulence result from the equations above. These equations are  dissipative and, in the absence of forcing, both types of conserved quantities in the ideal case would decay in time and eventually vanish. 
Remarkably, in 2D incompressible turbulence these two types of conservation laws are found numerically to decay at \emph{different rates} and this difference is found to have an important emergent effect on the spectral properties and statistics of 2D incompressible turbulence, \cite{Kr1967,Kr1971}. In particular, this feature of \emph{selective decay} in 2D turbulence leads to an inverse cascade of energy toward larger scales and a forward cascade of enstrophy toward smaller scales where its dissipation occurs, \cite{MaMo1980}. This emergent effect of viscosity in 2D incompressible turbulence has had a long history of investigation and is thought to be important in many applications, including climate modelling. 
Recent work tends to understand this effect as a mechanism for parameterizing the interactions between disparate scales, for example, between large coherent oceanic flows and much smaller eddies, \cite{MaAd2010}. For other recent discussions of these ideas in modelling disparate scale interactions in geophysical and astrophysical turbulence, see \cite{MiPoSu2008}.

\paragraph{Question.} Given that the two types of conservation laws in ideal 2D flow are both properties of the Lie-Poisson bracket for such flows, and their dissipation has a profound effect of the properties of the flow, it is natural to ask whether one may use the Lie-Poisson bracket of ideal 2D flow to \emph{impose} a process of selective decay. 

\subsection{Casimir dissipation}\label{Casimir_dissipation} 

The Lie-Poisson bracket may be used to introduce a type of ``nonlinear viscosity'' that preserves the energy, but dissipates a given Casimir. This may be accomplished naturally by modifying the Hamiltonian formulation to introduce a \emph{quadratic} Lie-Poisson bracket structure, as follows,
\begin{equation}
\frac{df( \omega )}{dt}
= \int_\mathcal{D}\omega  \left[ \frac{\delta f}{\delta \omega } , \frac{\delta h}{\delta \omega } \right] 
\,dxdy
- \theta \!\! \int_\mathcal{D}\left[ \frac{\delta f}{\delta \omega } , \frac{\delta h}{\delta \omega } \right]
L
\left[ \frac{\delta C}{\delta \omega }, \frac{\delta h}{\delta \omega }\right]\,dxdy
\,,
\label{2LPB-2D}
\end{equation}
where $\theta$ is a given constant and $L$ is an arbitrary positive self-adjoint linear differential operator. The differential operator $L$ allows a degree of scale selection in the Casimir dissipation approach. For example, 
choosing $L=(1-\alpha^2\Delta)^s$ would define a Sobolev $H^s( \mathcal{D} )$ inner product with length scale $\alpha$. 
One immediately sees from equation \eqref{2LPB-2D} that 
\[
\frac{dh( \omega )}{dt} = 0
\quad\hbox{and}\quad
\frac{dC( \omega )}{dt} = 
- \,\theta \!\! 
\int_\mathcal{D} \left[ \frac{\delta C}{\delta \omega } , \frac{\delta h}{\delta \omega } \right]
L
\left[ \frac{\delta C}{\delta \omega } , \frac{\delta h}{\delta \omega } \right]
dxdy
=: - \,\theta
\left\|\left[ \frac{\delta C}{\delta \omega }, \frac{\delta h}{\delta \omega }\right]\right\|^2_L
\,.\]
The latter equation recovers the Hamiltonian case when $\theta=0$. If a given Casimir $C( \omega )$ is not sign-definite, one may take its square, which is \emph{still a Casimir}, and find that
\begin{equation}
\frac12\frac{d\,C( \omega )^2}{dt} = 
- \,\theta\,C( \omega )^2 \!\! 
\int_\mathcal{D} \left[ \frac{\delta C}{\delta \omega } , \frac{\delta h}{\delta \omega } \right]
L
\left[ \frac{\delta C}{\delta \omega } , \frac{\delta h}{\delta \omega } \right]
dxdy
= - \,\theta\,C( \omega )^2
\left\|\left[ \frac{\delta C}{\delta \omega }, \frac{\delta h}{\delta \omega }\right]\right\|^2_L
\,
\label{Csquared-dissip}
\end{equation}
for  $ \omega $ verifying the equations \eqref{2LPB-2D} associated to the Casimir function $\frac12C( \omega)^2 $. These calculations have demonstrated the following.

\begin{proposition}\rm
For $\theta>0$ in equation (\ref{2LPB-2D}), the energy Hamiltonian $h( \omega )$ is conserved and squared Casimirs $C( \omega)^2$ decay exponentially at a rate proportional to the square of the (possibly degenerate) norm induced by the $L$-pairing with linear operator $L$. 
\end{proposition}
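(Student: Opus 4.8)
The plan is essentially to assemble the two differential identities already displayed just above the statement, so the proof amounts to careful bookkeeping with the quadratic bracket \eqref{2LPB-2D} together with one elementary ODE argument.

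First I would verify energy conservation by setting $f=h$ in \eqref{2LPB-2D}. The ideal Lie--Poisson term $\int_\mathcal{D}\omega[\delta h/\delta\omega,\delta h/\delta\omega]\,dxdy$ vanishes because the Jacobian bracket $[\,\cdot\,,\,\cdot\,]$ is antisymmetric, hence $[\delta h/\delta\omega,\delta h/\delta\omega]=0$ pointwise; the correction term vanishes for the same reason, since it contains the factor $[\delta h/\delta\omega,\delta h/\delta\omega]$. Thus $dh(\omega)/dt=0$ for every value of $\theta$.

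Next I would compute the rate of change of a Casimir. Substituting $f=C$ into \eqref{2LPB-2D}, the ideal term $\int_\mathcal{D}\omega[\delta C/\delta\omega,\delta h/\delta\omega]\,dxdy$ vanishes by precisely the integration-by-parts identity \eqref{2D-integparts-id} used in the proof of the Casimir characterization above, i.e.\ by the defining property $\{C,h\}=0$. What survives is $-\theta\int_\mathcal{D}[\delta C/\delta\omega,\delta h/\delta\omega]\,L\,[\delta C/\delta\omega,\delta h/\delta\omega]\,dxdy$, which, since $L$ is self-adjoint and positive, equals $-\theta\,\|[\delta C/\delta\omega,\delta h/\delta\omega]\|_L^2\le 0$, with the proviso that $\|\cdot\|_L$ may be only a seminorm when $L$ has a nontrivial kernel. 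To pass to the \emph{squared} Casimir, I would run the same computation for the dynamics generated by the Casimir function $\tfrac12 C(\omega)^2$: because $C(\omega)$ is a scalar, $\delta(\tfrac12 C^2)/\delta\omega = C(\omega)\,\delta C/\delta\omega$ and therefore $[\delta(\tfrac12 C^2)/\delta\omega,\delta h/\delta\omega]=C(\omega)\,[\delta C/\delta\omega,\delta h/\delta\omega]$, which pulls the scalar factor $C(\omega)^2$ outside the $L$-pairing and reproduces exactly \eqref{Csquared-dissip}.

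Finally I would read \eqref{Csquared-dissip} as a scalar ODE: with $G(t):=C(\omega(t))^2$ and $\lambda(t):=2\theta\,\|[\delta C/\delta\omega,\delta h/\delta\omega]\|_L^2\ge 0$, it states $\dot G=-\lambda(t)\,G$, whose solution is $G(t)=G(0)\exp\!\big(-\!\int_0^t\lambda(s)\,ds\big)$, the claimed exponential decay at a rate proportional to the squared $L$-norm of $[\delta C/\delta\omega,\delta h/\delta\omega]$. The only point requiring a word of care — and the nearest thing to an obstacle — is the precise meaning of ``exponential'': since $\lambda(t)$ is itself dynamical and the $L$-seminorm may degenerate, genuine exponential decay is guaranteed only on time intervals where $\lambda$ stays bounded below by a positive constant, while in general one has the Gr\"onwall-type monotone bound above. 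I would also remark that the hypothesis $\theta>0$ is exactly what fixes the sign, hence the direction of the decay.
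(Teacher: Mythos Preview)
Your proof is correct and follows essentially the same route as the paper: the proposition there is stated as a summary of the displayed calculations immediately preceding it, and you have simply spelled those out (energy conservation from antisymmetry of the Jacobian bracket, Casimir dissipation from the defining property $\{C,h\}=0$, and the passage to $\tfrac12 C^2$ via $\delta(\tfrac12 C^2)/\delta\omega=C\,\delta C/\delta\omega$). Your additional ODE/Gr\"onwall remark about the meaning of ``exponential'' when $\lambda(t)$ is dynamical and $\|\cdot\|_L$ may be degenerate is a useful clarification that the paper does not make explicit.
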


\begin{remark}\rm
$L$ is called a \emph{scale-selective} operator, because choosing it to contain higher spatial derivatives emphasizes the higher wavenumbers in the $L$-norm in equation (\ref{Csquared-dissip}). This, in turn, determines the Casimir dissipation rate. Thus, the choice of the operator $L$ establishes the range of sizes of spatial scales which most contribute to the modifications of the equations due to Casimir dissipation. In particular, if one were to take 
\[
L=(1-\alpha^2\Delta)^s
\,, \quad\hbox{so that}\quad
\left\|\left[ \frac{\delta C}{\delta \omega }, \frac{\delta h}{\delta \omega }\right]\right\|^2_L = \left\|\left[ \frac{\delta C}{\delta \omega }, \frac{\delta h}{\delta \omega }\right]\right\|^2_{H^s}
,\]
then scale sizes greater or less than the length $\alpha$ would tend to behave differently. 
The scale-selection operator $L$ will allow the resulting modifications of the fluid motion equations to be interpreted in several examples as parameterizing the nonlinear, dynamical interactions among disparate scales that lead to emergent features of the flow.
\end{remark}

\paragraph{Double Lie-Poisson bracket vorticity dynamics (\ref{2LPB-2D}) in 2D.}
The 2D vorticity dynamics generated by the quadratic Lie-Poisson bracket structure in equation (\ref{2LPB-2D}) is found by choosing $f(\omega)=\omega$ above, for which
\begin{align}
\begin{split}
\frac{\partial \omega}{\partial t}  
&= \left[ \frac{\delta h}{\delta \omega }\,,\,\omega \right] 
+\theta \left[ \frac{\delta h}{\delta \omega }\,,\,L
\left[ \frac{\delta h}{\delta \omega }, \frac{\delta C}{\delta \omega } \right] \right]
\\
&= \left[ \frac{\delta h}{\delta \omega }\,,\,\omega 
+\theta\, L
\left[ \frac{\delta h}{\delta \omega }, \frac{\delta C}{\delta \omega } \right]  \right] 
\,,
\end{split}
\label{2LPB-2D-vort-eqn}
\end{align}
after using the identity in (\ref{2D-integparts-id}) obtained from integration by parts. 

\begin{remark}[Double bracket dissipation structure and ``anticipated vorticity'']\rm$\,$

\begin{itemize}
\item
When the operator $L={\rm Id}$, equation (\ref{2LPB-2D-vort-eqn}) defines a \emph{double} bracket dissipation structure whose formula is reminiscent of the double bracket dissipation introduced in \cite{BKMR1996} and \cite{HoPuTr2008} which would yield here the dissipation term $ \theta \left[ \omega , \left[ \omega , \frac{\delta h}{\delta \omega }\right] \right] $. Both the similarities and differences between these two approaches can be clearly seen from their respective abstract Lie algebraic formulations, see Remark \ref{CD_VS_DB} later.
\item
When one sets $\delta h/\delta \omega= \psi$, with stream function $\psi(x,y)$, and chooses the Casimir to be the \emph{enstrophy}, $C=\frac12\int_\mathcal{D} \omega^2\,dxdy$, which is a measure of the turbulence intensity, then equation (\ref{2LPB-2D-vort-eqn}) recovers the \emph{anticipated vorticity model} (AVM) of \cite{SaBa1981,SaBa1985} for an appropriate choice of the operator $L$. See also \cite{VaHu1988}. Namely, in terms of the stream function representation of the 2D vector velocity $\mathbf{u}={\rm curl}\,(\psi\mathbf{\hat{z}})= - \,\mathbf{\hat{z}}\times \nabla \psi$ with vorticity $\omega\,\mathbf{\hat{z}}={\rm curl}\,{\rm curl}\,(\psi\mathbf{\hat{z}})=-\Delta\psi\,\mathbf{\hat{z}}$, equation (\ref{2LPB-2D-vort-eqn}) above with $L={\rm Id}$ and $\delta h/\delta \omega= \psi$ becomes
\begin{align}
\begin{split}
\frac{\partial \omega}{\partial t}  
&= \left[ \psi\,,\,\omega \right] 
+\theta \left[ \psi\,,\,L\left[ \psi, \omega \right] \right]
= \left[ \psi\,,\,\omega 
+\theta L\left[ \psi, \omega \right]  \right] 
\\
&= -\, \mathbf{u}\cdot \nabla \omega 
+ \theta\, \mathbf{u}\cdot \nabla L(\mathbf{u}\cdot \nabla \omega)
= - \,\mathbf{u}\cdot \nabla (\omega - \theta\,L(\mathbf{u}\cdot \nabla \omega))
\,.
\end{split}
\label{AVM-eqn}
\end{align}
The word ``anticipated'' was introduced in \cite{SaBa1981,SaBa1985} when naming AVM, because if one were to take $\theta\simeq\Delta t$, 
where $\Delta t$ is the time step of the numerical method, 
the term proportional to $\theta$ in the last parenthesis of (\ref{AVM-eqn}) would approximate the vorticity at $t+\Delta t$ for the pure Hamiltonian vorticity evolution at linear order in $\Delta t$. The AVM approach has been very intriguing in the modelling of 2D incompressible turbulence and much has been written about it in that literature. These ideas 
and particularly the choice of the time-scale parameter $\theta$ 
have also been discussed again recently in the context of shallow water modelling, see, e.g., \cite{ChGuRi2011,ChGuRi2012,We2012} and Section \ref{RSW-sec} below. 

Other \emph{scale-selective} AVM models considered in \cite{SaBa1981,SaBa1985} were obtained by making an appropriate choice for the differential operator $L$ in equation (\ref{2LPB-2D}), or (\ref{2LPB-2D-vort-eqn}). In particular, \cite{SaBa1981,SaBa1985} 
take $L=R^{16}\Delta^8$,  corresponding to the Sobolev space $H^8$ and thereby introducing a characteristic length-scale $R$ 
and a type of hyperviscosity for the Casimir dissipation. 
This choice biases the effect of the AV term in (\ref{AVM-eqn}) so that $L$ affects predominantly the excitations with higher wave numbers, that is those with $kR>1$, and $L$ rapidly becomes stronger as $k$ increases. In summary, the effect of the AV term is to preserve the kinetic energy of all of the excitations at \emph{all wave numbers}, while dissipating the enstrophy, or turbulent intensity, of the higher wave number modes with $k>1/R$, increasingly at higher wave numbers.  This process takes the higher wave number modes out of play in the forward cascade of enstrophy in 2D turbulence and thus may have a regularizing effect, although we are not aware of any rigorous analysis of the solution behavior for the AVM models. 
\end{itemize}
\end{remark}

\begin{remark}[Regularisation.]\label{alpha-remark1}\rm
Even if AVM does have a regularizing effect in 2D, however, it would be unlikely for any of its 2D regularization properties to carry over to 3D turbulence, because only $L^2$ control of the kinetic energy of the solutions  in the AVM models is afforded in 3D, which is not effective against the vortex stretching term that dominates 3D flows. Controlling the latter requires at least $H^1$ control, as discussed in \cite{FoHoTi2001,FoHoTi2002} for the Navier-Stokes-alpha model. In fact, as we shall see, the dissipation of enstrophy at small scales does leads to a creation of circulation at \emph{all scales}. This may be an indirect mechanism for coarse-graining and thus may provide some regularization by transferring circulation at small scales into circulation at all scales.

\end{remark}

\begin{remark}[Quasigeostrophy]\rm
The flexibility of the Casimir dissipation approach can be shown by passing to the case of 2D rotating quasigeostrophic fluids (QG), for which the vorticity $\omega$ is replaced by \emph{potential} vorticity $q$ satisfying the relation $\delta h/\delta q= \psi$ and equation \eqref{AVM-eqn} becomes
\[
\frac{\partial q}{\partial t}=  - \,\mathbf{u}\cdot \nabla (q - \theta\,L\big(\mathbf{u}\cdot \nabla q)\big), \quad \mathbf{u} = - \mathbf{\hat{z}}\times \nabla \psi , \quad q=- \Delta \psi + \mathcal{F} \psi+f
\,,
\]
where the constants $\mathcal{F}$ and $f$ denote the square of the inverse Rossby radius and the rotation frequency, respectively.

The coarse-graining effect of the AV term in numerical simulations of 2D quasigeostrophic flows was demonstrated in \cite{GrSa1989}, who found that the AV method apparently promoted the inverse cascade and thereby kept energy from piling up at the smallest resolved scales. This coarse-graining effect enhanced their long-time simulations of the emergence of large-scale Fofonoff gyres in weakly decaying 2D quasigeostrophic flow.   
\end{remark}

\paragraph{Double Lie-Poisson bracket vorticity dynamics in 3D.}
Our goal now is to write the quadratic Lie-Poisson bracket structure (\ref{2LPB-2D}) in any number of dimensions and illustrate its effects in several hopefully illuminating examples. 
We begin by using (\ref{brkt-idnty}) in 2D to extend the quadratic vorticity bracket expression (\ref{2LPB-2D}) from 2D to 3D, as follows,
\begin{align}
\begin{split}
\frac{df( \bom )}{dt}
=
\{f\,,\,h\}( \bom )
&= \int_\mathcal{D}\bom\cdot  {\rm curl}\,\frac{\delta f}{\delta \bom } \times {\rm curl}\,\frac{\delta h}{\delta \bom }
\,d^3x
\\
&\hspace{1mm}
- \theta \!\! \int_\mathcal{D}
{\rm curl}\,\frac{\delta f}{\delta \bom } \times {\rm curl}\,\frac{\delta h}{\delta \bom }
\cdot L
\left(
{\rm curl}\,\frac{\delta C}{\delta \bom } \times {\rm curl}\,\frac{\delta h}{\delta \bom }\right)\,d^3x
\,.
\end{split}
\label{2LPB-3D}
\end{align}
The first (skewsymmetric, Lie-Poisson) summand in this modified vorticity bracket may be written in terms of velocity equivalently as 
\begin{align*}
\{f\,,\,h\}_{+}(\mathbf{u})
&= \int_\mathcal{D}\bom\cdot  {\rm curl}\,\frac{\delta f}{\delta \bom } \times {\rm curl}\,\frac{\delta h}{\delta \bom }
\,d^3x
\\&= 
\int_\mathcal{D} \mathbf{u}\cdot  {\rm curl}\left(\frac{\delta f}{\delta \mathbf{u} } 
\times \frac{\delta h}{\delta \mathbf{u} }\right)
\,d^3x
\\&= 
-\int_\mathcal{D} \mathbf{u}\cdot  
\left[\frac{\delta f}{\delta \mathbf{u} } \,,\, \frac{\delta h}{\delta \mathbf{u} }\right]
\,d^3x\,,
\end{align*}
where now in 3D the square brackets $[\,\cdot\,,\,\cdot\,]$ represent Lie brackets of divergence-free vector fields, and we have used the identity $[ \mathbf{u} , \mathbf{v} ]=- \operatorname{curl}( \mathbf{u} \times \mathbf{v} )$ for $ \nabla \cdot \mathbf{u} = 0 = \nabla \cdot \mathbf{v}$. 

The second (inner product) summand in the vorticity bracket (\ref{2LPB-3D}) depends on the Casimir $C$ and may be written in terms of velocity equivalently as 
\begin{align*}
\{f\,,\,h\}_{2}(\mathbf{u})
&=  
- \theta \!\! \int_\mathcal{D}
{\rm curl}\,\frac{\delta f}{\delta \bom } \times {\rm curl}\,\frac{\delta h}{\delta \bom }
\cdot  L 
\left(
{\rm curl}\,\frac{\delta C}{\delta \bom } \times {\rm curl}\,\frac{\delta h}{\delta \bom }\right)\,d^3x
\\
&=  
- \theta \!\! \int_\mathcal{D}
\left(\frac{\delta f}{\delta \mathbf{u} } \times \frac{\delta h}{\delta \mathbf{u} }
\right)\cdot  L 
\left(
\frac{\delta C}{\delta \mathbf{u} } \times \frac{\delta h}{\delta \mathbf{u} }\right)\,d^3x
\\
&=  
- \theta \!\! \int_\mathcal{D}
{\rm curl}\left(\frac{\delta f}{\delta \mathbf{u} } \times \frac{\delta h}{\delta \mathbf{u} }
\right)\cdot {{\rm curl}^{-1}  L \,
{\rm curl}^{-1} {\rm curl}}\left(
\frac{\delta C}{\delta \mathbf{u} } \times \frac{\delta h}{\delta \mathbf{u} }\right)\,d^3x
\\
&=
- \theta \!\! \int_\mathcal{D}
\left[\frac{\delta f}{\delta \mathbf{u} } \,,\, \frac{\delta h}{\delta \mathbf{u} }\right]
\cdot {{\rm curl}^{-1} L \,{\rm curl}^{-1}}
\left[\frac{\delta C}{\delta \mathbf{u} } \,,\, \frac{\delta h}{\delta \mathbf{u} }\right]\,d^3x
\\
&=:
- \theta\,\gamma \left( 
\left[\frac{\delta f}{\delta \mathbf{u} } \,,\, \frac{\delta h}{\delta \mathbf{u} }\right]
\,,
\left[\frac{\delta C}{\delta \mathbf{u} } \,,\, \frac{\delta h}{\delta \mathbf{u} }\right]\right)_\Lambda
.\end{align*}
Here, the positive operator $ \Lambda :=   {\rm curl}^{-1} L\,{\rm curl}^{-1}$ defines a symmetric bilinear form $ \gamma $ on velocities. That is, on the Lie algebra of divergence-free vector fields we have $\gamma: \mathfrak{X}\times \mathfrak{X} \to \mathbb{R}$
\begin{align}
\gamma ( \mathbf{u} , \mathbf{v} )
:=\int_ \mathcal{D} \mathbf{u} \cdot \Lambda \mathbf{v} \,d ^3 x
:= \left\langle   \mathbf{u} ,\, \mathbf{v}^\flat  \right\rangle
\,.
\label{gamma-def}
\end{align}
The second equality here defines a pairing $\langle\,\cdot\,,\,\cdot\, \rangle: \mathfrak{X}\times \mathfrak{X}^* \to \mathbb{R}$, with the flat-operator $\flat: \mathfrak{X}\to \mathfrak{X}^*$ given by $\mathbf{v}^\flat=(\Lambda\mathbf{v})\cdot d \mathbf{x}$.

Having investigated the structure of the modified vorticity bracket for vorticity $\bom$ in equation (\ref{2LPB-3D}), we may now write the modified equation for the vorticity, as
\begin{align}
\begin{split}
\frac{\partial \bom}{\partial t}  
&=
{\rm curl}\left({\rm curl}\frac{\delta h}{\delta \bom }\times \bom\right)
- \theta\, {\rm curl} 
\left({\rm curl}\frac{\delta h}{\delta \bom }\times
L \left( {\rm curl}\,\frac{\delta C}{\delta \bom } \times {\rm curl}\,\frac{\delta h}{\delta \bom } \right)  \right)
\\
&=
{\rm curl}\left[{\rm curl}\frac{\delta h}{\delta \bom } \times 
\left(\bom - \theta\, L \left({\rm curl}\frac{\delta C}{\delta \bom }\times
 {\rm curl}\,\frac{\delta h }{\delta \bom }\right)  \right) \right]
\,.
\end{split}
\label{2LPB-3Dvort}
\end{align}

\paragraph{Conserved quantities in 3D.}
In 3D the fluid kinetic energy is expressed in terms of its vorticity as 
\begin{equation}
h(\bom) 
= \frac12\int_\mathcal{D} |\mathbf{u}|^2  \,d^3x
= \frac12\int_\mathcal{D} \bom\cdot (-\Delta^{-1}\bom)  \,d^3x
\label{erg-def}
\end{equation}
and the only Casimir in the Lie-Poisson Hamiltonian formulation is the helicity 
\begin{equation}
C(\bom) 
= \frac12\int_\mathcal{D} \bom\cdot \mathbf{u}  \,d^3x
= \frac12\int_\mathcal{D} \bom\cdot {\rm curl}^{-1}\bom  \,d^3x
\,,
\label{helicity-def}
\end{equation}
in which we may take ${\rm curl}^{-1}\bom=\mathbf{u}$ to be divergence-free.
The helicity represents the total linkage of the lines of vorticity with itself. It is preserved by the Euler fluid equations, but is dissipated by the viscous Navier-Stokes equations. 
For more details and discussions of the nature of fluid helicity, see \cite{ArKh1998}.
For the choices of $h(\bom)$ and $C(\bom)$ above, we have
\begin{align}
\frac{\delta h}{\delta \mathbf{u} }={\rm curl}\frac{\delta h}{\delta \bom } = \mathbf{u}
\quad\hbox{and}\quad
\frac{\delta C}{\delta \mathbf{u} }={\rm curl}\frac{\delta C}{\delta \bom } = {\rm curl}\,\mathbf{u} = \bom
\,.
\label{helicity-var}
\end{align}
Consequently, the vorticity dynamics equation (\ref{2LPB-3Dvort}) for these choices becomes
\begin{align}
\begin{split}
\frac{\partial \bom}{\partial t} 
&=
{\rm curl}\left(\mathbf{u} \times 
\left(\bom - \theta\,L \left(\bom \times \mathbf{u} \right)  \right) \right)
\\&=
-\,\Big[ \mathbf{u}\,,\, \bom - \theta\,{\rm curl}\left(\Lambda\big[ \mathbf{u}\,,\, \bom \big]\right) \Big]
.\end{split}
\label{2LPB-3Dvort-Leqn}
\end{align}
When the operator $L$ is the identity $(L={\rm Id})$ in these expressions, we have $\Lambda=-\Delta^{-1}$ (the inverse Laplacian) and when $\Lambda={\rm Id}$, we have $L=-\Delta$ (the Laplacian). 

\paragraph{Ertel's theorem.}
In velocity form, the motion equation reads
\begin{equation}\label{3D_Casimir_velocity} 
\partial _t \mathbf{u} + (\boldsymbol{\omega}- \theta L ( \boldsymbol{\omega} \times \mathbf{u} )) \times \mathbf{u} = - \nabla p\,.
\end{equation} 
Note that ${\rm div}(L (\boldsymbol{\omega} \times \mathbf{u} ))=0$, since 
$L={\rm curl}\,\Lambda\,{\rm curl}$ as a composition of operators. 

We may re-arrange the motion equation (\ref{3D_Casimir_velocity}) into the form of Euler's fluid equation with an additional force in \emph{Lorentz form}, so it does no work and thereby conserves kinetic energy: 
\begin{equation}\label{3D_Casimir_velocity-Lorentz} 
\partial _t \mathbf{u} - \mathbf{u} \times \boldsymbol{\omega}
+ \nabla p
= -\, \mathbf{u} \times \mathbf{B}
\,,\end{equation} 
where the divergence-free vector $\mathbf{B}$ is defined by 
\begin{equation}\label{Bvector-def}
\mathbf{B}
:=
\theta\,L ( \boldsymbol{\omega} \times \mathbf{u} )
=
\theta\,{\rm curl}\,\Lambda\,{\rm curl} (  \boldsymbol{\omega} \times \mathbf{u} )  \,.
\end{equation} 
The equation for vorticity dynamics (\ref{2LPB-3Dvort-Leqn}) in this notation becomes
\begin{align}
\begin{split}
\frac{\partial \bom}{\partial t} 
&=
{\rm curl}\left(\mathbf{u} \times 
\left(\bom - \mathbf{B}  \right) \right)
=
-\,\Big[ \mathbf{u}\,,\, \bom - \mathbf{B} \Big]
\\&=
-\, \mathbf{u}\cdot\nabla (\bom - \mathbf{B})
+  (\bom - \mathbf{B})\cdot\nabla\mathbf{u}
,\end{split}
\label{3Dvort-Leqn}
\end{align}
in which the last term is a \emph{modified vortex stretching term}, which either enhances or diminishes vortex stretching, depending on the alignment and relative magnitudes of the vectors $\bom$ and $\mathbf{B}$. An interesting feature of this modified vortex stretching term is its effect on the Ertel theorem for this system \cite{Ertel1942}. Calculating from equation (\ref{3Dvort-Leqn}) for any \textcolor{magenta}{time-dependent}  scalar function $\lambda $ yields
\begin{equation}\label{3D_Ertel} 
\frac{D}{Dt} \big(\bom \cdot\nabla \lambda \big)
= 
\bom \cdot\nabla \frac{D\lambda }{Dt}
+ 
\big[ \mathbf{u}\,,\, \mathbf{B} \big]\cdot\nabla \lambda 
\quad\hbox{with}\quad
\frac{D}{Dt} := \partial _t + \mathbf{u}\cdot\nabla
\,.
\end{equation} 

\begin{remark}[The loss of Ertel's theorem]\rm
The modified vortex stretching term $[ \mathbf{u}\,,\, \mathbf{B}]$ in equation (\ref{3D_Ertel}) implies that Ertel's theorem \emph{does not hold} for the AV models. Namely, conservation on fluid parcels of a quantity $\lambda $ (e.g., potential temperature, as in atmospheric physics, or buoyancy, as in ocean circulation dynamics) does \emph{not} imply conservation on fluid parcels of the corresponding potential vorticity $q=\bom \cdot\nabla \lambda $, as it does in the absence of $\mathbf{B}$ and the modified vortex stretching term $[ \mathbf{u}\,,\, \mathbf{B}]$ in equation (\ref{3Dvort-Leqn}). 
Conservation of potential vorticity on fluid parcels is a fundamental organizing principle for the standard fluid theories of ocean and atmosphere dynamics
\cite{HMR1985,HMc90,HKR1994}.
Its loss for the AV models is an important difference of AV models from the standard theories for ocean and atmosphere dynamics. As we shall see later in Section \ref{KN-thm}, the loss of potential vorticity conservation along flow lines corresponds to the loss of Kelvin's circulation theorem in AV models.
\end{remark}

\begin{remark}[Boundary conditions in 3D]\rm
\label{bdycond-remark}

If $ \mathcal{D} $ has smooth boundary, we consider the Lie algebra $\mathfrak{g}  = \mathfrak{X} _{div}( \mathcal{D} )= \{ \mathbf{u} \in \mathfrak{X}  ( \mathcal{D} )\mid \operatorname{div} \mathbf{u} =0 , \;\; \mathbf{u} \| \partial \mathcal{D}  \}$ and take $ \mathfrak{g}  ^\ast := \mathfrak{g}  $.
That is,  boundary condition $ \mathbf{u} \cdot \mathbf{n} =0$ can be used for Casimir dissipation if we identify the dual Lie algebra with the Lie algebra, i.e. if we use the velocity representation.

$(1)$ The Lie-Poisson part is, as usual, computed as
\begin{align*} 
\{f,g\}_+( \mathbf{u} )
&=
\int_ \mathcal{D} \mathbf{u} \cdot \operatorname{curl} \left( \frac{\delta f}{\delta \mathbf{u} }\times \frac{\delta h}{\delta \mathbf{u} }\right) 
d^3\mathbf{x}
\\
&=
\int_ \mathcal{D} 
\operatorname{div} \left(\mathbf{u} \times 
\left( \frac{\delta f}{\delta \mathbf{u} }\times 
\frac{\delta h}{\delta \mathbf{u} } \right)   \right) 
d^3\mathbf{x}  
+
\int_ \mathcal{D} \operatorname{curl} \mathbf{u} \cdot \left( \frac{\delta f}{\delta \mathbf{u} }\times \frac{\delta h}{\delta \mathbf{u} }\right)  
d^3\mathbf{x}
\\
&=\int_ { \partial \mathcal{D} } \mathbf{u} \times \left( \frac{\delta f}{\delta \mathbf{u} }\times \frac{\delta h}{\delta \mathbf{u} } \right) \cdot \mathbf{n}\, 
d S
+\int_ \mathcal{D} \frac{\delta f}{\delta \mathbf{u} }\cdot\left(  \frac{\delta h}{\delta \mathbf{u} }\times \operatorname{curl} \mathbf{u}\right)  
d^3\mathbf{x}   \\
&= \int_ { \partial \mathcal{D} } \left( \mathbf{u} \cdot \frac{\delta h}{\delta \mathbf{u} }\right) \frac{\delta f}{\delta \mathbf{u} }\cdot \mathbf{n}\,d S - \left( \mathbf{u} \cdot \frac{\delta f}{\delta \mathbf{u} }\right) \frac{\delta h}{\delta \mathbf{u} }\cdot \mathbf{n}\,d S 
+\int_ \mathcal{D} \frac{\delta f}{\delta \mathbf{u} }\cdot\left(  \frac{\delta h}{\delta \mathbf{u} }\times \operatorname{curl} \mathbf{u}\right)  
d^3\mathbf{x} \,.
\end{align*} 
The boundary term vanishes since $ \frac{\delta f}{\delta \mathbf{u} }\cdot \mathbf{n}=0$. So $\dot f=\{f,h\}_+$ yields $ \partial _t \mathbf{u} + \operatorname{curl} \mathbf{u}\times \frac{\delta h}{\delta \mathbf{u} }=- \nabla p$, upon using the Hodge decomposition to include the pressure gradient.

$(2)$ For the dissipative part, one computes
\begin{align*} 
\{f,h\}_2( \mathbf{u} )
&=- \theta \int_ \mathcal{D} \operatorname{curl} \left( \frac{\delta f}{\delta \mathbf{u} }\times \frac{\delta h}{\delta \mathbf{u} }\right) \cdot \Lambda \operatorname{curl} \left( \frac{\delta C}{\delta \mathbf{u} } \times \frac{\delta h}{\delta \mathbf{u} }\right) d^3\mathbf{x} 
\\
&=: - \theta \int_ \mathcal{D} \operatorname{curl} \left( \frac{\delta f}{\delta \mathbf{u} }\times \frac{\delta h}{\delta \mathbf{u} }\right) \cdot \mathbf{A}\,   
d^3\mathbf{x}
\\
&= - \theta \int_\mathcal{D} \operatorname{div} \left(\left( \frac{\delta f}{\delta \mathbf{u} }\times \frac{\delta h}{\delta \mathbf{u} }\right) \times \mathbf{A} \right) d^3\mathbf{x}  - \theta \int_ \mathcal{D} \left( \frac{\delta f}{\delta \mathbf{u} }\times \frac{\delta h}{\delta \mathbf{u} }\right) \cdot \operatorname{curl} \mathbf{A}\, d^3\mathbf{x}  \\
&= - \theta \int_{ \partial \mathcal{D} } \left( \mathbf{A}\times  \left( \frac{\delta h}{\delta \mathbf{u} }\times \frac{\delta f}{\delta \mathbf{u} }\right) \right)  \cdot \mathbf{n} \,d S - \theta \int_ \mathcal{D} \left( \frac{\delta f}{\delta \mathbf{u} }\times \frac{\delta h}{\delta \mathbf{u} }\right) \cdot \operatorname{curl} \mathbf{A}\, d^3\mathbf{x}  \\
&= - \theta \int_{ \partial \mathcal{D} } \left( \left( \mathbf{A} \cdot \frac{\delta f}{\delta \mathbf{u} }\right) \frac{\delta h}{\delta \mathbf{u} }\cdot \mathbf{n} - \left( \mathbf{A} \cdot \frac{\delta h}{\delta \mathbf{u} }\right) \frac{\delta f}{\delta \mathbf{u} }\cdot \mathbf{n} \right)   \, d S - \theta \int_ \mathcal{D} \left( \frac{\delta h}{\delta \mathbf{u} }\times  \operatorname{curl} \mathbf{A} \right) \cdot   \frac{\delta f}{\delta \mathbf{u} }d^3\mathbf{x} \,.
\end{align*}
As in (1), the boundary term again vanishes  and $\dot f=\{f,h\}_++\{f,g\}_2$, for all $f$, yields  
\[
\partial _t \mathbf{u} + \operatorname{curl} \mathbf{u}\times \frac{\delta h}{\delta \mathbf{u} }=\theta \operatorname{curl} \mathbf{A} \times \frac{\delta h}{\delta \mathbf{u} }  - \nabla p
\]
on using the Hodge decomposition to include the pressure gradient.
In addition, we have defined 
\begin{equation}
\mathbf{A} := \Lambda \operatorname{curl} \left( \frac{\delta f}{\delta \mathbf{u} }\times \frac{\delta h}{\delta \mathbf{u} }\right)
\label{Avector-def}
\end{equation}
which is related to the vector $ \mathbf{B} $ in equation \eqref{Bvector-def} by the relation $ \theta \operatorname{curl} \mathbf{A} = \mathbf{B} $. 

Thus, in the velocity representation and using $ \Lambda $ (and not $L$) as the given differential operator, no additional hypotheses are needed about boundary conditions, and the differential operator $ \Lambda $ never needs to be inverted. This applies both in $2D$ and $3D$. Thus, we may freely ignore boundary terms when integrating by parts for fixed boundaries in the velocity representation. We may also ignore such boundary terms when using the equivalent $\boldsymbol{\omega} $-vorticity representation for the dual Lie algebra, in $3D$, because those terms may always be removed by transforming to the equivalent velocity representation.
\end{remark}

\begin{remark}[Stationary solutions and generalized Beltrami flows and Lamb surfaces]\rm 
Stationary solutions of \eqref{3D_Casimir_velocity} satisfy
\[
(\boldsymbol{\omega}- \theta L ( \boldsymbol{\omega} \times \mathbf{u} )) \times \mathbf{u} = - \nabla p
\,.\]
This stationary flow condition means that either the vectors of velocity $\mathbf{u}$ and modified vorticity $\boldsymbol{\omega}- \theta L ( \boldsymbol{\omega} \times \mathbf{u} )$ are parallel, a subcase of which is $\boldsymbol{\omega} \times \mathbf{u} =0$ (Beltrami flows), or that the vectors $\mathbf{u}$ and $\boldsymbol{\omega}- \theta L ( \boldsymbol{\omega} \times \mathbf{u} )$ are both tangent to the same pressure surface $p(\mathbf{x})=$ const, {\it viz}
\begin{equation}
 \mathbf{u} \cdot \nabla p = 0
\quad\hbox{and}\quad
 (\boldsymbol{\omega}- \theta L ( \boldsymbol{\omega} \times \mathbf{u} )) \cdot \nabla p = 0
\,.
\label{Lamb-inv-cond}
\end{equation}
For $\theta=0$, a surface defined by $p(\mathbf{x})=$ const for steady flows would be called a Lamb surface, see, e.g., \cite{Ho2011-GM1}; so surfaces satisfying (\ref{Lamb-inv-cond}) may be called \emph{generalized Lamb surfaces}. The geometry of generalized Beltrami flows and Lamb surfaces may be an interesting research topic, because it addresses the effect of the modified vortex stretching term in equation \eqref{3Dvort-Leqn}. 
\end{remark}

\begin{remark}\rm
Equation (\ref{2LPB-3Dvort-Leqn}) is the 3D version of the anticipated vorticity equation (\ref{AVM-eqn}) in 2D with $L$ replaced by ${\rm curl}\,\Lambda=L\,{\rm curl}^{-1}$. 
Again, the constant $\theta\simeq\Delta t$ has units of  time. 
A ``scale-aware'' version of this anticipated vorticity method in 3D may be obtained by selecting the value of the time-scale parameter $\theta$ based on local mean properties of the solution, instead of the time step of the numerical method. However, then the issue arises of whether the time-scale parameter $\theta$ should be part of the symmetric operator $L$ and, thus, whether its spatial dependence should play a role. This 
is beyond the scope of the present work. 
For a discussion of a numerical algorithm with scale awareness for 2D multiresolution grids, see \cite{ChGuRi2012}.

\end{remark}

\paragraph{Anticipated Kelvin circulation.}
Inverting the curl in equation (\ref{2LPB-3Dvort-Leqn}) yields
\begin{align} 
\begin{split} 
(\partial _t + \mathcal{L}_u)(\mathbf{u} \cdot d\mathbf{x})
&= -dp 
+ \theta\, \mathcal{L}_u\left( \,{\rm curl} ^{-1} L( \boldsymbol{\omega} \times \mathbf{u} )\cdot d \mathbf{x} \right) \\
&= -dp 
+ \theta\, \mathcal{L}_u\left( \,{\rm curl} ^{-1} L\,{\rm curl} ^{-1}( {\rm curl} ( \boldsymbol{\omega} \times \mathbf{u} ))\cdot d \mathbf{x} \right) \\
&= -dp 
+ \theta\,\mathcal{L}_u\,\left(  \Lambda [\mathbf{u} , \boldsymbol{\omega} ]\cdot d \mathbf{x} \right)
, 
\end{split} 
\label{Kel-circ-calc}
\end{align} 
where, we have substituted $ \Lambda := {\rm curl} ^{-1} L\,{\rm curl} ^{-1}$ and used $[ \mathbf{u} , \bom ]=- \operatorname{curl}( \mathbf{u} \times \bom )$.
The symbol $\mathcal{L}_u$ denotes the Lie derivative with respect to the velocity vector field $u=\mathbf{u}\cdot\nabla$, expressible when applied to 1-forms in 3D as 
\begin{equation} 
\mathcal{L}_u (\mathbf{v}\cdot d\mathbf{x})
=
-\,
(\mathbf{u}\times{\rm curl}\,\mathbf{v})\cdot d\mathbf{x}
+
d(\mathbf{u}\cdot\mathbf{v})
\label{Lie-der3D-def}
\,,
\end{equation} 
for a vector $\mathbf{u}$ and a co-vector $\mathbf{v}$.
For an introduction to the use of Lie derivatives in fluid mechanics, see \cite{Ho2011-GM1}. For a more advanced discussion, see \cite{HMR1998}. 

Kevin's circulation theorem now becomes, cf. equation (\ref{2LPB-3Dvort-Leqn})
\begin{equation}\label{Kel-circ-thm}
\begin{aligned}
\frac{d}{dt}\oint_{c(\mathbf{u})} \mathbf{u} \cdot d \mathbf{x}
&=
\oint_{c(\mathbf{u})}(\partial _t + \mathcal{L}_u)(\mathbf{u} \cdot d\mathbf{x})
= -\theta \oint_{c(\mathbf{u})}  \mathbf{u}\times 
{\rm curl}\,\big(\Lambda  [ \mathbf{u} ,\boldsymbol{\omega}]\big)\cdot d \mathbf{x}\\
&= -\,\theta \oint_{c(\mathbf{u})}  \mathbf{u}\times 
L( \boldsymbol{\omega} \times  \mathbf{u})\cdot d \mathbf{x}
= \oint_{c(\mathbf{u})} ( \mathbf{u}\times \mathbf{B})\cdot d \mathbf{x}
\,,
\end{aligned}
\end{equation} 
where $L={\rm curl}\,\Lambda\,{\rm curl}$.
The last term is the source of circulation due to Casimir dissipation. 
  
\paragraph{Variational-derivative expressions.}
Recall that, in terms of the velocity, we have the variational expressions
\[
\frac{\delta h}{\delta \mathbf{u} }= \mathbf{u} \quad \text{and} \quad \frac{\delta C}{\delta \mathbf{u} }= \boldsymbol{\omega}    \quad (C=\hbox{helicity}),
\]
so the last line of the calculation in (\ref{Kel-circ-calc}) may be expressed in terms of variational derivatives as
\begin{align}
\left(\partial _t + \mathcal{L}_{\delta h/\delta \mathbf{u}} \right)(\mathbf{u}  \cdot d\mathbf{x})=-dp 
+ \theta\,\mathcal{L}_{\delta h/\delta \mathbf{u}}\left[  \frac{\delta h}{\delta \mathbf{u} }  , \frac{\delta C}{\delta \mathbf{u} } \right] ^\flat 
,
\label{3D-velocity-eqn}
\end{align}
where the musical symbol  $ \flat : \mathfrak{g}  \rightarrow \mathfrak{g}  ^\ast $ maps vector fields into  1-forms.
This type of expression in terms of variational derivatives will also be obtained by using the general theory developed in the next section.

\begin{proposition}[Squared Casimir dissipation -- Helicity]\label{sign-indef-prop}
\rm
Let the 3D version of equation (\ref{Csquared-dissip}) for squared Casimir dissipation be applied to the square of the helicity, with  
\[
C( \bom)=\frac12\int_\mathcal{D}\bom\cdot {\rm curl}^{-1} \bom\,d^3x
\quad\hbox{and}\quad
h( \bom)=\frac12\int_\mathcal{D}|{\rm curl}^{-1} \bom|^2d^3x
=\frac12\int_\mathcal{D}|\mathbf{u}|^2\,d^3x
\,.\]
Then the squared helicity Casimir dissipates according to the equation
\begin{equation}\label{Helicitysquared-dissip}
\begin{aligned} 
\frac12\frac{d\,C( \bom )^2}{dt} 
&=-\, \theta\,C( \bom )^2\gamma\Big(
 \left[\boldsymbol{\omega}  , \mathbf{u} \right] 
,\left[ \boldsymbol{\omega} , \mathbf{u} \right] \Big)
\\
&=: - \,\theta\,C( \bom )^2
\left\|\big[ \mathbf{u}, \bom\big]\right\|^2_{\Lambda}
= - \,\theta\,C( \bom )^2
\left\| \mathbf{u}\times \bom\right\|^2_{L}
\,.
\end{aligned} 
\end{equation}

\end{proposition}

\begin{proof}
The proof is a direct calculation, starting from equation (\ref{2LPB-3D}).
\begin{align}
\begin{split}
\frac12\frac{d\,C( \bom )^2}{dt} 
&=- \theta C( \bom )^2\!\! \int_\mathcal{D}
{\rm curl}\,\frac{\delta C}{\delta \bom } \times {\rm curl}\,\frac{\delta h}{\delta \bom }
\cdot L
\left(
{\rm curl}\,\frac{\delta C}{\delta \bom } \times {\rm curl}\,\frac{\delta h}{\delta \bom }\right)\,d^3x\\
&=- \theta C( \bom )^2\!\! \int_\mathcal{D}
{\rm curl}^{-1} \left[ {\rm curl}\,\frac{\delta C}{\delta \bom } , {\rm curl}\,\frac{\delta h}{\delta \bom }\right] 
\cdot L\,
{\rm curl}^{-1} \left[ {\rm curl}\,\frac{\delta C}{\delta \bom } , {\rm curl}\,\frac{\delta h}{\delta \bom }\right] \,d^3x\\
&=- \theta C( \bom )^2\!\! \int_\mathcal{D}
 \left[ {\rm curl}\,\frac{\delta C}{\delta \bom } , {\rm curl}\,\frac{\delta h}{\delta \bom }\right] 
\cdot \Lambda  \left[ {\rm curl}\,\frac{\delta C}{\delta \bom } , {\rm curl}\,\frac{\delta h}{\delta \bom }\right] \,d^3x\\
&=- \theta C( \bom )^2\!\! \int_\mathcal{D}
 \left[\boldsymbol{\omega}  , \mathbf{u} \right] 
\cdot \Lambda  \left[ \boldsymbol{\omega} , \mathbf{u} \right] \,d^3x
\quad\hbox{with}\quad
\Lambda := {\rm curl}^{-1}L\, {\rm curl}^{-1}
\\
&=
- \theta\,C( \bom )^2\,\gamma\Big(
 \left[\boldsymbol{\omega}  , \mathbf{u} \right] 
,\left[ \boldsymbol{\omega} , \mathbf{u} \right] \Big)
\\
&=:- \theta C( \bom )^2 \|
 \left[\boldsymbol{\omega}  , \mathbf{u} \right] \|^2_ \Lambda \,.
\end{split}
\end{align}
A shorter proof may be obtained by replacing the functional derivatives directly in the first line to get the equivalent result, 
\[
\frac12\frac{d\,C( \bom )^2}{dt} 
= - \theta C( \bom )^2 \| \boldsymbol{\omega} \times \mathbf{u} \| ^2 _{ L}
\,,
\]
and then substituting the interesting identity that relates the $L$- and $\Lambda$-norms,
\begin{align}
 \| \boldsymbol{\omega} \times \mathbf{u} \| ^2 _{ L}
 = \|\left[\boldsymbol{\omega}  , \mathbf{u} \right] \|^2_ \Lambda
 \,.
 \label{LLnorms-id}
\end{align}
\end{proof}


\section{Selective decay by Casimir dissipation: General theory}
\label{GenTheory-sec}

It is clear that the modification process in the previous section produces the anticipated equations in the cases discussed so far, but despite appearances in those examples, this Casimir dissipation process involves considerably more than merely iterating  the evolution operator. 

\subsection{The modified Lie-Poisson (LP) framework}
To describe the modified Lie-Poisson structure for the general theory, we fix a Lie algebra $ \mathfrak{g}  $ with Lie brackets denoted by $ [\,\cdot\,,\,\cdot\,]$, and consider a space $ \mathfrak{g}  ^\ast $ in weak nondegenerate duality with $ \mathfrak{g}  $, that is, there exists a pairing $ \left\langle\,\cdot\,, \,\cdot\,\right\rangle : \mathfrak{g}  ^\ast \times \mathfrak{g}  \rightarrow \mathbb{R}  $, such that for any $ \xi \in \mathfrak{g}  $, the condition $ \left\langle \mu , \xi \right\rangle = 0$, for all $\mu \in \mathfrak{g}  ^\ast $ implies $ \xi = 0 $ and, similarly, for any $ \mu \in \mathfrak{g}  ^\ast $, the condition $ \left\langle \mu , \xi \right\rangle = 0$ for all $\xi  \in \mathfrak{g}  $ implies $ \mu  = 0 $. Recall that $ \mathfrak{g}  ^\ast $ carries a natural Poisson structure, called the Lie-Poisson structure, given by
\begin{equation}\label{right_LP} 
\{f,h\}_+( \mu )= \left\langle \mu , \left[ \frac{\delta f}{\delta \mu }, \frac{\delta h}{\delta \mu }\right] \right\rangle,
\end{equation} 
(\cite{MaRa1994}) where $f, g \in \mathcal{F} ( \mathfrak{g}  ^\ast )$ are real valued functions defined on $ \mathfrak{g}  ^\ast $ and $ {\delta f}/{\delta \mu }\in \mathfrak{g}  $ denotes the functional derivative, defined through the duality pairing $ \left\langle \,\cdot\,, \,\cdot\,\right\rangle $, by
\[
\left\langle \frac{\delta f}{\delta \mu }, \delta \mu \right\rangle = \left.\frac{d}{d\varepsilon}\right|_{\varepsilon=0} f( \mu + \varepsilon \delta \mu ).
\]
This Poisson bracket is obtained by reduction of the canonical Poisson structure on the phase space $T^*G$ of the Lie group $G$ with Lie algebra $ \mathfrak{g}  $. The symmetry underlying this reduction is given by right translation by $G$ on $T^*G$. In the case of ideal fluid motion, this symmetry corresponds to relabelling symmetry of the Lagrangian in Hamilton's principle. 

Recall that a function $C: \mathfrak{g}  ^\ast \rightarrow \mathbb{R} $ is a \textit{Casimir function} for the Lie-Poisson structure (\ref{right_LP}) if it verifies $\{C, f\}_+=0$ for all functions $f \in \mathcal{F} ( \mathfrak{g}  ^\ast )$ or, equivalently $ \operatorname{ad}^*_ { \frac{\delta C}{\delta \mu }}\mu=0$, for all $ \mu \in \mathfrak{g}  ^\ast $, where $ \operatorname{ad}^*_ \xi: \mathfrak{g}  ^\ast \rightarrow \mathfrak{g}  ^\ast $ is the coadjoint operator defined by $ \left\langle \operatorname{ad}^*_ \xi \mu , \eta \right\rangle = \left\langle \mu , [ \xi , \eta ] \right\rangle $.  

Below, we will denote by $ \gamma _ \mu $ the (possibly $ \mu $-dependent) symmetric bilinear form $ \gamma _\mu : \mathfrak{g}  \times  \mathfrak{g}  \rightarrow \mathbb{R}$. This form is said to be \textit{positive} if
\[
\gamma _\mu ( \xi , \xi ) \geq 0, \quad \text{for all $ \xi \in \mathfrak{g}  $}. 
\]

\begin{definition} Given a Casimir function $C(\mu) $, a positive symmetric bilinear form $ \gamma_\mu  $, and a real number $ \theta >0$, we consider the following modification of the LP (Lie-Poisson) equation to produce the Casimir dissipative LP equation:
\begin{equation}\label{Casimir_dissipation} 
\partial _t \mu + \operatorname{ad}^*_ { \frac{\delta h}{\delta \mu }  } \mu 
= \theta \operatorname{ad}^*_ { \frac{\delta h}{\delta \mu }  } 
 \left[ \frac{\delta C}{\delta \mu }, \frac{\delta h}{\delta \mu }\right] ^\flat  
 =
  -\,\theta \operatorname{ad}^*_ { \frac{\delta h}{\delta \mu }  }
  \left(
  \operatorname{ad}_ { \frac{\delta h}{\delta \mu }  }
  \frac{\delta C}{\delta \mu }  \right) ^\flat
=
  -\,\theta \left(\operatorname{ad}^\dagger_ { \frac{\delta h}{\delta \mu }  }
 \left( \operatorname{ad}_ { \frac{\delta h}{\delta \mu }  }
  \frac{\delta C}{\delta \mu }  \right) \right)^\flat
\,,
\end{equation} 
with notation
\[
 \left(\operatorname{ad}^\dagger_\xi\eta \right)^\flat
:= \operatorname{ad}^*_\xi(\eta^\flat)
\,,
\]
where $ \flat : \mathfrak{g}  \rightarrow \mathfrak{g}  ^\ast $ is the flat operator associated to $ \gamma_\mu  $. That is, the linear form $ \xi ^\flat \in \mathfrak{g}  ^\ast $ is given by $ \xi ^\flat ( \eta  )= \gamma _\mu ( \xi , \eta )$, for all $ \xi , \eta \in \mathfrak{g}  $.
\end{definition}

Note that the flat operator need not be either injective or surjective. Note also that in the equations \eqref{Casimir_dissipation} above, the flat operator is evaluated at $ \mu $. It is important to observe that the modification term depends on both the given Hamiltonian function $h$ and the chosen Casimir $C$.
It is convenient to write \eqref{Casimir_dissipation} as
\begin{equation}\label{Casimir_dissipation_tilde} 
\partial _t \mu + \operatorname{ad}^*_ { \frac{\delta h}{\delta \mu }  } \tilde{\mu} =0,
\end{equation} 
for the modified momentum $\tilde{ \mu }:= \mu+ \theta \left[ \frac{\delta h}{\delta \mu }, \frac{\delta C}{\delta \mu }\right] ^\flat$.

\begin{remark}[Left-invariant case]\rm Recall that the Lie-Poisson structure \eqref{right_LP} is associated to \textit{right} $G$-invariance on $T^*G$. We have made this choice because ideal fluids are naturally right-invariant systems in the Eulerian representation. Other systems, such as rigid bodies, are \textit{left} $G$-invariant. In this case, one obtains the Lie-Poisson brackets $\{f,g\}_-( \mu )= -\left\langle \mu , \left[ \frac{\delta f}{\delta \mu }, \frac{\delta h}{\delta \mu }\right] \right\rangle $ and this leads to the following change of sign in the Casimir dissipative LP equation \eqref{Casimir_dissipation}: 
\begin{equation}\label{anticipated-ad_motion} 
\partial _t \mu - \operatorname{ad}^*_ { \frac{\delta h}{\delta \mu }  } \mu 
= \theta \operatorname{ad}^*_ { \frac{\delta h}{\delta \mu }  } 
 \left[ \frac{\delta C}{\delta \mu }, \frac{\delta h}{\delta \mu }\right] ^\flat  
.
\end{equation} 
\paragraph{The rigid body example.}
For the rigid body case itself, one may identify $\mu$ in (\ref{anticipated-ad_motion} ) with $\bPi\in\mathbb{R}^3$, the body angular momentum, and $\partial H/\partial \bPi = \bOm\in\mathbb{R}^3$ with the body angular velocity, while the Casimir is $C=\frac12|\bPi|^2$. Then, choosing for $ \gamma $ the usual inner product on $ \mathbb{R}  ^3 $, (\ref{anticipated-ad_motion}) becomes
\begin{equation}\label{anticipated-RB_motion} 
\frac{d\bPi}{dt} + \bOm\times\bPi = -\,\theta \,\bOm\times(\bPi\times\bOm)
\,.
\end{equation} 
Upon denoting $\nabla=\partial/\partial \bPi$ so that $\nabla C = \bPi$ and $\nabla H = \bOm$, equation (\ref{anticipated-RB_motion}) for the rigid body implies the following form of the dynamics, which extends that of \cite{Nambu1973} for Hamiltonian dynamics that takes place in $\mathbb{R}^3$ to the case of Casimir dissipation, 
\begin{equation}\label{RB_motion-Nambu} 
\frac{dF}{dt}  = -\,\nabla C\cdot (\nabla F\times \nabla H) 
-\,\theta \,(\nabla C\times \nabla H)\cdot (\nabla F\times \nabla H) 
\,.
\end{equation} 
Consequently, the Casimir is dissipated by this modified rigid body equation when $\theta>0$, since
\begin{equation}\label{RB_motion-disCas} 
\frac{dC}{dt}  = -\,\theta \,|\nabla C\times \nabla H|^2
\,.
\end{equation} 
\end{remark} 

\medskip

\paragraph{Lie-Poisson formulation.} In Lie-Poisson form, the Casimir dissipation equation (\ref{Casimir_dissipation}) reads
\begin{align}
\begin{split}
\label{LP_form} 
\frac{df(\mu) }{dt} 
&= \left\langle \frac{\delta f }{\delta \mu }, \partial _t \mu \right\rangle 
= -\left\langle   \frac{\delta f}{\delta \mu },\operatorname{ad}^*_ { \frac{\delta h}{\delta \mu }  } \mu \right\rangle 
+ \theta \left\langle \frac{\delta f}{\delta \mu },\operatorname{ad}^*_ { \frac{\delta h}{\delta \mu }  } \left[ \frac{\delta C}{\delta \mu }, \frac{\delta h}{\delta \mu }\right] ^\flat \right\rangle 
\\
&= \left\langle \mu , \left[ \frac{\delta f}{\delta \mu } , \frac{\delta h}{\delta \mu } \right] \right\rangle 
- \theta \left\langle \left[ \frac{\delta f}{\delta \mu } , \frac{\delta h}{\delta \mu } \right], \left[ \frac{\delta C}{\delta \mu }, \frac{\delta h}{\delta \mu }\right] ^\flat\right\rangle  
\\
&= \left\langle \mu , \left[ \frac{\delta f}{\delta \mu } , \frac{\delta h}{\delta \mu } \right] \right\rangle 
- \theta \left\langle 
\operatorname{ad}_\frac{\delta h}{\delta \mu } 
\frac{\delta f}{\delta \mu } , 
\left(
\operatorname{ad}_\frac{\delta h}{\delta \mu } 
\frac{\delta C}{\delta \mu } \right)^\flat\right\rangle  
\\
&= \left\{ f,h \right\} _+
- \theta\, \gamma \left( \left[ \frac{\delta f}{\delta \mu } , \frac{\delta h}{\delta \mu } \right], \left[ \frac{\delta C}{\delta \mu } , \frac{\delta h}{\delta \mu } \right]\right),
\end{split}
\end{align} 
for an arbitrary function $ f: \mathfrak{g}  ^\ast \rightarrow \mathbb{R}  $.

The energy is preserved, since we have
\[
\frac{dh(\mu) }{dt}= \left\{ h,h \right\} _+\,- \theta\, \gamma \left( \left[ \frac{\delta h}{\delta \mu } , \frac{\delta h}{\delta \mu } \right], \left[ \frac{\delta C}{\delta \mu } , \frac{\delta h}{\delta \mu } \right]\right) =0.
\]
However, when $ \theta >0$ the Casimir function $C$ is dissipated since
\begin{equation}\label{Casimirdissipation} 
\frac{dC(\mu) }{dt}
= \left\{ C,h \right\} _+ -\, \theta\, \gamma \left( \left[ \frac{\delta C}{\delta \mu } , \frac{\delta h}{\delta \mu } \right], \left[ \frac{\delta C}{\delta \mu } , \frac{\delta h}{\delta \mu } \right]\right) 
=-\, \theta \left \| \left[ \frac{\delta C}{\delta \mu } , \frac{\delta h}{\delta \mu } \right]\right \| _\gamma ^2 ,
\end{equation} 
where $\| \xi \|^2_\gamma  :=\gamma_\mu  ( \xi , \xi )$ is the quadratic form (possibly degenerate) associated to the positive bilinear form $ \gamma_\mu  $.

In the left invariant case, the Lie-Poisson part of the selective decay Casimir dissipation form changes by $\{ f,h\}_+\to\{ f,h\}_-$, and in comparison with \eqref{LP_form} one finds,
\begin{equation}\label{leftinv-Cdissip} 
\frac{df(\mu) }{dt}=\left\{ f,h \right\} _- - \theta \gamma \left( \left[ \frac{\delta f}{\delta \mu } , \frac{\delta h}{\delta \mu } \right], \left[ \frac{\delta C}{\delta \mu } , \frac{\delta h}{\delta \mu } \right]\right).
\end{equation}

\begin{remark}[Casimir dissipation versus anticipated setting]\label{Casimir_Diss_VS_AV}\rm 
Clearly, equation (\ref{Casimir_dissipation}) is not equivalent to a simple iteration of the evolution operator, as might be considered in an ``anticipated'' setting in the form
\begin{equation}\label{anticipated_motion} 
\partial _t \mu + \operatorname{ad}^*_ { \frac{\delta h}{\delta \mu }  } \mu 
= \theta \operatorname{ad}^*_ { \frac{\delta h}{\delta \mu }  } 
 \left(\operatorname{ad}^*_ { \frac{\delta h}{\delta \mu }  } \mu\right) 
\,.\end{equation} 
However, the ``anticipated''  modification does not guarantee selective decay of Casimirs, $dC/dt<0$,  in general. The dynamics of equations (\ref{Casimir_dissipation}) for Casimir dissipation and (\ref{anticipated_motion}) for ``anticipated'' motion are only equivalent when 
\begin{equation}\label{equiv-case} 
 \operatorname{ad}^*_ { \frac{\delta h}{\delta \mu }  } 
 \left[ \frac{\delta C}{\delta \mu }, \frac{\delta h}{\delta \mu }\right] ^\flat  
 = \operatorname{ad}^*_ { \frac{\delta h}{\delta \mu }  } 
 \left(\operatorname{ad}^*_ { \frac{\delta h}{\delta \mu }  } \mu\right) 
,
\end{equation} 
which can only hold for quadratic Casimirs $C$. Thus, ``anticipated'' motion and selective Casimir decay can only coincide for certain quadratic Casimirs.
Such a coincidence arises for example in the case of 2D incompressible ideal fluids. For 3D fluids, however, they are distinct, and the anticipated setting \eqref{anticipated_motion} does not yield selective decay of Casimirs.
\end{remark}

\begin{remark}[Relation of the present work to \cite{VaCaYo1989}]\label{Ch-energy dissipation}\rm
Suppose  we had chosen to dissipate energy instead of the Casimir of the Lie-Poisson bracket for fluids. This could be accomplished easily, by exchanging $h\leftrightarrow C$ in the second term of \eqref{LP_form}, thereby obtaining
\begin{align}\label{LP_form2} 
\frac{df(\mu) }{dt} 
&= \left\{ f,h \right\} _+
- \theta\, \gamma \left( 
\left[ \frac{\delta f}{\delta \mu } , \frac{\delta C}{\delta \mu } \right], \left[ \frac{\delta h}{\delta \mu } , \frac{\delta C}{\delta \mu } \right]\right).
\end{align} 
Then, instead of the equation of motion \eqref{Casimir_dissipation} leading to Casimir dissipation,
\begin{equation}\label{Casimir_dissipation2} 
\partial _t \mu + \operatorname{ad}^*_ { \frac{\delta h}{\delta \mu }  } \mu 
= \theta \operatorname{ad}^*_ { \frac{\delta h}{\delta \mu }  } 
 \left[ \frac{\delta C}{\delta \mu }, \frac{\delta h}{\delta \mu }\right] ^\flat  
\,,
\end{equation} 
one would obtain the equation of motion 
\begin{equation}\label{unanticipated-ad_motion} 
\partial _t \mu + \operatorname{ad}^*_ { \frac{\delta h}{\delta \mu }  } \mu 
= -\,\theta \operatorname{ad}^*_ { \frac{\delta C}{\delta \mu }  } 
 \left[ \frac{\delta C}{\delta \mu }, \frac{\delta h}{\delta \mu }\right] ^\flat  
\,.
\end{equation} 

Taking $f=h $ in \eqref{LP_form2} shows that when $ \theta >0$ the energy $h$ is dissipated, since for that choice we have
\begin{equation}\label{erg-dissipation} 
\frac{dh(\mu) }{dt}
= \left\{ h,h \right\} _+ -\, \theta\, \gamma \left( \left[ \frac{\delta C}{\delta \mu } , \frac{\delta h}{\delta \mu } \right], \left[ \frac{\delta C}{\delta \mu } , \frac{\delta h}{\delta \mu } \right]\right) 
=-\, \theta \left \| \left[ \frac{\delta C}{\delta \mu } , \frac{\delta h}{\delta \mu } \right]\right \| _\gamma ^2 ,
\end{equation} 
and the symmetry between $h$ and $C$ is evident again, because the rate of dissipation is the same as before, even though it is now energy rather than Casimir that is being dissipated. This remark relates the basic ideas in \cite{VaCaYo1989} to the present work. 
\end{remark}

\begin{remark}[Left invariant case]
For the left-invariant case, equation \eqref{unanticipated-ad_motion}  becomes
\begin{equation}\label{unanticipated-ad_motion-left} 
\partial _t \mu - \operatorname{ad}^*_ { \frac{\delta h}{\delta \mu }  } \mu 
= -\,\theta \operatorname{ad}^*_ { \frac{\delta C}{\delta \mu }  } 
 \left[ \frac{\delta C}{\delta \mu }, \frac{\delta h}{\delta \mu }\right] ^\flat  
\,.
\end{equation} 
In the rigid body example, this is
\begin{equation}\label{unanticipated-RB_motion} 
\frac{d\bPi}{dt} + \bOm\times\bPi 
= 
\theta \,\bPi\times(\bPi\times\bOm)
\,.
\end{equation} 
Hence, the energy is dissipated by this modified rigid body equation when $\theta>0$, since
\begin{equation}\label{RB_motion-disCas} 
\frac{d}{dt}\frac12\bOm\cdot\bPi  
= \theta \,\bOm\cdot\bPi\times(\bPi\times\bOm)
= -\theta \,|\bOm\times\bPi|^2
\,.
\end{equation} 
\end{remark}

\begin{remark}[Casimir dissipation \emph{versus} double bracket dissipation]\label{CD_VS_DB}\rm The Casimir dissipative LP setting considered here is essentially different from the double bracket dissipation setting. Indeed, double bracket equations dissipate energy while they preserve the Casimirs. This is exactly the opposite of what happens in the present setting. There are however apparent similarities in the Lie algebraic formulations. Indeed, for general Lie algebras the double bracket dissipation equations can be written as
\begin{equation}\label{Lie_algebraic_DB} 
\frac{d f( \mu )}{dt}= \{f,h\}_+( \mu )-\theta  \gamma ^\ast \left( \operatorname{ad}^*_{ \frac{\delta k}{\delta \mu }} \mu ,  \operatorname{ad}^*_{ \frac{\delta f}{\delta \mu }} \mu  \right) ,
\end{equation} 
(compare with equation \eqref{LP_form}) where $ \gamma $ is a inner product on $ \mathfrak{g}  $, $ \gamma ^\ast $ is the inner product induced on $ \mathfrak{g}  ^\ast $, and $k: \mathfrak{g}  ^\ast \rightarrow \mathbb{R}  $ is a given function. One readily checks that Casimirs are preserved while, in the special case $k=h$, the energy dissipates. In that case, the equation of motion arising from the double bracket in (\ref{Lie_algebraic_DB}) is given by
\begin{equation}\label{DB-motioneqn} 
\partial _t \mu + \operatorname{ad}^*_ { \frac{\delta h}{\delta \mu }  } \mu 
= \theta \operatorname{ad}^*_{\big(  \operatorname{ad}^\ast_{ \frac{\delta h}{\delta \mu }} \mu\big) ^\sharp}  \mu  
\,,
\end{equation} 
see \cite{BKMR1996,HoPuTr2008}, where $\sharp: \mathfrak{g}  ^\ast \rightarrow \mathfrak{g}  $ is the sharp operator associated to $ \gamma $. 
Formula \eqref{DB-motioneqn} for double bracket dissipation may be compared with the corresponding equation for Casimir dissipation in \eqref{Casimir_dissipation}
\begin{equation}\label{Casimir_dissipation1} 
\partial _t \mu + \operatorname{ad}^*_ { \frac{\delta h}{\delta \mu }  } \mu 
= \theta \operatorname{ad}^*_ { \frac{\delta h}{\delta \mu }  } 
 \left[ \frac{\delta C}{\delta \mu }, \frac{\delta h}{\delta \mu }\right] ^\flat  
,
\end{equation} 
and also with the corresponding equation for ``anticipated motion'' in (\ref{anticipated_motion})
\begin{equation}\label{anticipated_motion1} 
\partial _t \mu + \operatorname{ad}^*_ { \frac{\delta h}{\delta \mu }  } \mu 
= \theta \operatorname{ad}^*_ { \frac{\delta h}{\delta \mu }  } 
 \left(\operatorname{ad}^*_ { \frac{\delta h}{\delta \mu }  } \mu\right) 
.\end{equation} 
The last two of these formulas happen to coincide with \eqref{AVM-eqn} \textcolor{magenta}{(with $L=id$)} for the case of 2D incompressible flow, see Remark \ref{Casimir_Diss_VS_AV}. More generally, this happens whenever an $ \operatorname{Ad}$-invariant inner product $\bar \gamma $ can be chosen on  the Lie algebra $ \mathfrak{g}  $ so that one can identify $ \mathfrak{g}  ^\ast $ with $ \mathfrak{g}$ using this inner product and obtain $ \operatorname{ad}^*_ \mu \xi =[ \xi , \mu ]$. Upon choosing the quadratic Casimir $ C( \mu )= \frac{1}{2} \bar \gamma ( \mu , \mu )$ and the inner product $ \gamma = \bar \gamma $, one observes that \eqref{Casimir_dissipation1} recovers \eqref{anticipated_motion1}.  

The double bracket equations \eqref{DB-motioneqn}, on the other hand, coincide in some particular cases with equations \eqref{unanticipated-ad_motion}, obtained from our approach by exchanging the functions $C$ and $h$. More precisely, this happens in the same situation as before, when an $ \operatorname{Ad}$-invariant  inner product $\bar \gamma $ can be chosen on the Lie algebra.
\end{remark}

\subsection{Kelvin-Noether theorem.}\label{KN-thm}
The well-known Kelvin circulation theorems for standard ideal fluid models  can be seen as reformulations of Noether's theorem and, therefore, they have an abstract Lie algebraic formulation (the Kelvin-Noether theorems), see \cite{HMR1998}. We now examine how the dissipation term introduced above modifies the abstract Kelvin circulation theorem.   

In order to formulate the Kelvin-Noether theorem, one has to choose a manifold $ \mathcal{C} $ on which the group $G$ acts on the left and consider a $G$-equivariant map $ \mathcal{K} : \mathcal{C} \rightarrow \mathfrak{g}  ^{**}$, i.e. $\left\langle  \mathcal{K} (gc),\operatorname{Ad}_{g^{-1} } ^\ast   \nu  \right\rangle =\left\langle \mathcal{K} (c ), \nu  \right\rangle, \forall\; g \in G$. Here $gc$ denotes the action of $g \in G$ on $ c \in \mathcal{C}$ and $ \operatorname{Ad}^*_g$ denotes the coadjoint action defined by $\left\langle \operatorname{Ad}^*_g \mu , \xi \right\rangle = \left\langle \mu \operatorname{Ad}_g \xi \right\rangle $, where $ \mu \in \mathfrak{g}  ^\ast $, $ \xi \in \mathfrak{g}  $, and $\operatorname{Ad}_g$ is the adjoint action of $G$ on $ \mathfrak{g}  $. Given $c \in \mathcal{C}$ and $ \mu \in \mathfrak{g}  ^\ast $, we will refer to $ \left\langle \mathcal{K} (c), \mu \right\rangle $ as the \textit{Kelvin-Noether quantity} (\cite{HMR1998}). In application to fluids, $ \mathcal{C} $ is the space of loops in the fluid domain and $ \mathcal{K} $ is the circulation around this loop, namely
\begin{equation}
\left\langle \mathcal{K} (c), \mathbf{u} \cdot d \mathbf{x} \right\rangle :=\oint_ c \mathbf{u} \cdot d\mathbf{x}.
\end{equation}

The Kelvin-Noether theorem for Casimir dissipative LP equations is formulated as follows. 

\begin{proposition}\label{KN_casimir}  Fix $ c _0 \in \mathcal{C} $ and consider a solution $ \mu (t)$ of the Casimir dissipative LP equation \eqref{Casimir_dissipation}. Let $g(t) \in G$ be the curve determined by the equation $ \frac{\delta h}{\delta \mu }= \dot g g ^{-1} $, $g(0)=e$. Then the time derivative of the Kelvin-Noether quantity $\left\langle \mathcal{K} ( g (t) c _0 ), \mu (t) \right\rangle$ associated to this solution is
\begin{equation}
\frac{d}{dt} \left\langle \mathcal{K} ( g (t) c _0 ), \mu (t) \right\rangle =\theta \left\langle \mathcal{K} (g(t) c _0 ), \operatorname{ad}^*_ { \frac{\delta h}{\delta \mu }  } 
 \left[ \frac{\delta C}{\delta \mu }, \frac{\delta h}{\delta \mu }\right] ^\flat\right\rangle.
\end{equation}
\end{proposition}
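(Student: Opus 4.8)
The plan is to compute $\frac{d}{dt}\left\langle \mathcal{K}(g(t)c_0), \mu(t)\right\rangle$ directly by differentiating the product and using the two defining properties: $G$-equivariance of $\mathcal{K}$ and the Casimir dissipative LP equation \eqref{Casimir_dissipation}. The key observation, standard in the derivation of the Kelvin--Noether theorem in \cite{HMR1998}, is that the $G$-equivariance of $\mathcal{K}$ lets one trade the action of $g(t)$ on $c_0$ for a coadjoint action on the argument $\mu$. Explicitly, write $\left\langle \mathcal{K}(g(t)c_0), \mu(t)\right\rangle = \left\langle \mathcal{K}(c_0), \operatorname{Ad}^*_{g(t)}\mu(t)\right\rangle$ using the equivariance identity (with $g$ replaced by $g^{-1}$ appropriately). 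This reduces the problem to differentiating the single curve $t\mapsto \operatorname{Ad}^*_{g(t)}\mu(t)$ in $\mathfrak{g}^*$, after which one pairs with the fixed element $\mathcal{K}(c_0)$ at the end.

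The central computation is then the derivative of $\operatorname{Ad}^*_{g(t)}\mu(t)$. Using the standard formula $\frac{d}{dt}\operatorname{Ad}^*_{g(t)} = \operatorname{Ad}^*_{g(t)}\circ \operatorname{ad}^*_{g^{-1}\dot g}$ (or its right-translation analogue, matching the convention $\frac{\delta h}{\delta \mu} = \dot g g^{-1}$ chosen here), together with the Leibniz rule, one gets
\begin{equation*}
\frac{d}{dt}\operatorname{Ad}^*_{g(t)}\mu(t) = \operatorname{Ad}^*_{g(t)}\left( \operatorname{ad}^*_{\delta h/\delta \mu}\mu + \partial_t \mu \right).
\end{equation*}
Here the bracketed expression is \emph{precisely} the left-hand side of \eqref{Casimir_dissipation}, so substituting the equation of motion replaces it by the dissipative term $\theta\, \operatorname{ad}^*_{\delta h/\delta \mu}\left[\frac{\delta C}{\delta \mu}, \frac{\delta h}{\delta \mu}\right]^\flat$. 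In the pure Lie--Poisson case ($\theta = 0$) this bracket vanishes and one recovers the classical conservation of the Kelvin--Noether quantity; the modification simply carries through the extra term. One then pairs with $\mathcal{K}(c_0)$ and uses equivariance once more in the reverse direction to restore $g(t)c_0$ in the argument, yielding exactly the claimed formula.

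The main point requiring care — rather than a genuine obstacle — is bookkeeping the left/right conventions consistently: the Lie--Poisson bracket used here is the right-invariant one $\{f,h\}_+$, the reconstruction equation is $\frac{\delta h}{\delta \mu} = \dot g g^{-1}$, and $\mathcal{C}$ carries a \emph{left} $G$-action, so the signs in the $\operatorname{Ad}^*$-differentiation formula and in the equivariance relation must be aligned so that the $\operatorname{ad}^*_{\delta h/\delta \mu}\mu$ term produced by differentiating $\operatorname{Ad}^*_{g(t)}$ matches (with the correct sign) the convective term in \eqref{Casimir_dissipation}. Once the conventions are fixed as in \cite{HMR1998}, the argument is a short three-line calculation: differentiate, substitute the equation of motion, and re-apply equivariance. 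No estimates or structural hypotheses beyond $G$-equivariance of $\mathcal{K}$ and smoothness of the solution curve are needed.
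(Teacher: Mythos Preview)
Your proposal is correct and follows essentially the same approach as the paper's own proof: rewrite the Kelvin--Noether quantity via $G$-equivariance as $\left\langle \mathcal{K}(c_0), \operatorname{Ad}^*_{g(t)}\mu(t)\right\rangle$, apply the formula $\frac{d}{dt}\operatorname{Ad}^*_{g(t)}\mu(t) = \operatorname{Ad}^*_{g(t)}\big(\partial_t\mu + \operatorname{ad}^*_{\dot g g^{-1}}\mu\big)$, substitute the equation of motion, and use equivariance once more to restore $g(t)c_0$. The paper carries this out in three lines with exactly the same logic and the same convention caveats you identify.
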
 
\begin{proof} The proof is a direct calculation. We have
\begin{align*} 
\frac{d}{dt} \left\langle \mathcal{K} ( g (t) c _0 ), \mu (t) \right\rangle &= \frac{d}{dt} \left\langle \mathcal{K} ( c _0 ), \operatorname{Ad}^*_{g(t)} \mu (t) \right\rangle  = \left\langle \mathcal{K} (c _0 ), \operatorname{Ad}^*_{g(t)} ( \partial _t \mu + \operatorname{ad}^*_{ \frac{\delta h}{\delta \mu }} \mu )   \right\rangle \\
&=\theta \left\langle \mathcal{K} (g(t) c _0 ), \operatorname{ad}^*_ { \frac{\delta h}{\delta \mu }  } 
 \left[ \frac{\delta C}{\delta \mu }, \frac{\delta h}{\delta \mu }\right] ^\flat\right\rangle, 
\end{align*} 
where, at the first equality we used the $G$-equivariance of $ \mathcal{K} $ and at the second equality we used the formula $ \frac{d}{dt} \operatorname{Ad}^*_{g(t)}\mu (t) =\operatorname{Ad}^*_{g(t)} ( \partial _t \mu(t)  + \operatorname{ad}^*_{ \dot g(t)  g(t)  ^{-1} } \mu (t) )$, see, e.g., \cite{MaRa1994}.
\end{proof}

Note that $g(t)\in G$ is the motion in Lagrangian coordinates associated to the evolution of the momentum $ \mu (t)\in \mathfrak{g}  ^\ast $ in Eulerian coordinates. The $\theta$ term is an extra source of circulation with a double commutator. This term is absent in the ordinary Lie-Poisson case (i.e., for $ \theta =0$) and therefore in this case the Kelvin-Noether quantity $\left\langle \mathcal{K} ( g (t) c _0 ), \mu (t) \right\rangle$ is conserved along solutions.

\begin{remark}[Three dimensional ideal flows]\rm The Casimir dissipation approach for 3D flows developed in \S\ref{Casimir_dissipation} follows from the present abstract formulation by choosing the Lie algebra $ \mathfrak{g}  = \mathfrak{X}  _{div}( \mathcal{D} )$ of divergence free vector fields on $ \mathcal{D} $, the helicity Casimir $C$, and the positive symmetric bilinear form $ \gamma ( \mathbf{u} , \mathbf{v} ) =\int_{ \mathcal{D} } \mathbf{u} \cdot \Lambda \mathbf{v} d ^3 x$, where $ \Lambda = {\rm curl} ^{-1} \,L\ {\rm curl} ^{-1} $ with $L$ an arbitrary self-adjoint positive operator. Using $\mathbf{v}^\flat = \Lambda \mathbf{v}\cdot\,d\mathbf{x}$, one readily checks that for these choices, equations \eqref{Casimir_dissipation} and \eqref{LP_form} yield \eqref{2LPB-3D} and \eqref{2LPB-3Dvort-Leqn}. Note that Proposition \ref{KN_casimir} applied to this case yields the circulation theorem
\[
\frac{d}{dt} \oint_{c( \mathbf{u} )} \mathbf{u} \cdot d \mathbf{x} = \theta \int_{ c( \mathbf{u} )}\mathcal{L} _ \mathbf{u} \left( \Lambda \left[ \mathbf{u} , \boldsymbol{\omega} \right] \cdot d \mathbf{x} \right) 
= -\theta  \oint_{ c( \mathbf{u} )} \mathbf{u} \times L( \boldsymbol{\omega} \times \mathbf{u})  \cdot d \mathbf{x} 
=
 \oint_{ c( \mathbf{u} )} (\mathbf{u} \times \mathbf{B})  \cdot d \mathbf{x}
\,,
\]
which recovers our previous expression for Kelvin's circulation theorem with Casimir dissipation obtained in \eqref{Kel-circ-thm} by a direct calculation. Rewriting this equation as 
\begin{equation}
\frac{d}{dt} \oint_{c( \mathbf{u} )}  \hspace{-4mm}
\mathbf{u} \cdot d \mathbf{x} 
=
 \int\!\!\!\!\int_{\partial S = c( \mathbf{u} )} \hspace{-8mm}
 \operatorname{curl}(\mathbf{u} \times \mathbf{B})  \cdot d \mathbf{S}
=
 \int\!\!\!\!\int_{\partial S = c( \mathbf{u} )} \hspace{-8mm} 
\big[ \mathbf{u}\,,\, \mathbf{B} \big] \cdot d \mathbf{S}
\,,
\label{vortstretchcirc}
\end{equation}
shows that the modified vortex stretching term in equation \eqref{3Dvort-Leqn} may either enhance or diminish circulation. 
\end{remark}

\subsection{Lagrange-d'Alembert variational principle.} We now explain how the Casimir dissipative LP equations can be obtained via a variational principle. Consider the Lagrangian $\ell: \mathfrak{g}  \rightarrow \mathbb{R}  $ related to $h$ via the Legendre transform, that is, we have
\[
h( \mu )= \left\langle \mu , \xi \right\rangle - \ell( \xi ), \quad  \mu := \frac{\delta \ell}{\delta \xi},
\]
where we assumed that the second relation yields a bijective correspondence between $ \xi $ and $ \mu $.
In terms of  $\ell$, equation \eqref{Casimir_dissipation} for Casimir dissipation reads
\begin{equation}\label{EP_Casimir} 
\partial _t \mu + \operatorname{ad}^*_ { \xi  } \mu 
= \theta \operatorname{ad}^*_ { \xi  } 
 \left[ \frac{\delta C}{\delta \mu }, \xi \right] ^\flat  
\,,\quad \mu := \frac{\delta \ell}{\delta \xi}. 
\end{equation} 
When $ \theta =0$ we recover the \textit{Euler-Poincar\'e equations}
\[
\partial_t \frac{\delta \ell}{\delta \xi}+ \operatorname{ad}^*_\xi  \frac{\delta \ell}{\delta \xi} =0
\]
and it is well-known that these equations can be obtained via the \textit{constrained  variational principle} \cite{ArKh1998,HMR1998}
\[
\delta \int_0 ^T  \ell( \xi ) dt=0, \quad\text{for variations} \quad\delta \xi = \partial _t \zeta -[ \xi , \zeta ],
\]
where $ \zeta \in \mathfrak{g}  $ is an arbitrary curve vanishing at $t=0,T$.

External forces $f( \xi ) \in \mathfrak{g}  ^\ast $ can be included in the Euler-Poincar\'e equations by using the \textit{reduced Lagrange-d'Alembert principle} \cite{Bl2004}
\begin{equation}\label{LdA_EP} 
\delta\left[  \int_0 ^T \ell( \xi ) dt\right] + \int_0 ^T \left\langle f( \xi ), \zeta \right\rangle dt  =0, \quad\text{for variations} \quad\delta \xi = \partial _t \zeta -[ \xi , \zeta ],
\end{equation}
where $ \zeta \in \mathfrak{g}  $ is an arbitrary curve vanishing at $t=0,T$.
Substituting into \eqref{LdA_EP} the formula for the force appearing in the Casimir dissipation equations \eqref{EP_Casimir} recovers equations \eqref{EP_Casimir}, now rederived from the Lagrange-d'Alembert variational principle
\[
\delta\left[  \int_0 ^T \ell( \xi ) dt\right] + \theta \int_0 ^T \gamma \left( \left[ \frac{\delta C}{\delta \mu }, \xi \right] , [ \xi ,\zeta] \right) dt  =0, \quad\text{for variations} \quad\delta \xi = \partial _t \zeta -[ \xi , \zeta ].
\]
Thus, in the Lagrange-d'Alembert formulation, the modification of the motion equation to impose selective decay is seen as an energy-conserving constraint force.

For example, for three dimensional ideal flows, equations \eqref{2LPB-3Dvort-Leqn} can be obtained from the variational principle
\[
\delta \int_0^T\int_ \mathcal{D} \frac{1}{2} | \mathbf{u} | ^2\, d ^3 x\,dt + \theta \int_0^T\int_ \mathcal{D} \left( \left[ \boldsymbol{\omega} , \mathbf{u} \right] \cdot \Lambda \left[ \mathbf{u} , \mathbf{v} \right] \right) \,d ^3 x \,dt=0, \quad \text{for variations} \quad \delta \mathbf{u} = \partial _t \mathbf{v} +[ \mathbf{u} , \mathbf{v} ].
\]
The second term on the left hand side can be also written as
\[
\theta \int_0^T\int_ \mathcal{D}( \boldsymbol{\omega} \times \mathbf{u})  \cdot L( \mathbf{u} \times  \mathbf{v})\,d ^3 x \,dt
\,.
\]
This is the time integral of the rate of virtual work done 
by the force for a virtual motion $\mathbf{v}= \delta \varphi \circ \varphi ^{-1} $ of the fluid configuration $\varphi : \mathcal{D} \rightarrow \mathcal{D} $.


\section{Semidirect product examples}\label{semidirect-sec}

The Hamiltonian structure of fluids that possess advected quantities such as heat, mass, buoyancy, magnetic field, etc., can be understood by using Lie-Poisson brackets for semidirect-product actions of Lie groups on vector spaces.

In this setting, besides the Lie group configuration space $G$, one needs to include a vector space $V $ on which $G$ acts linearly. Its dual vector space $ V ^\ast $ contains the advected quantities. From this, one considers the semidirect product $G \,\circledS\, V$ with Lie algebra $ \mathfrak{g}  \,\circledS\, V$, and the Hamiltonian structure is given by the Lie-Poisson bracket \eqref{right_LP}, written on $ ( \mathfrak{g}  \,\circledS\, V) ^\ast $ instead of $ \mathfrak{g}  ^\ast $. We refer to \cite{HMR1998} for a detailed treatment. Given a Hamiltonian function $h=h( \mu ,a)$ on $( \mathfrak{g}  \,\circledS\, V) ^\ast $ one thus obtains the Lie-Poisson equations
\[
\partial _t (\mu ,a)+ \operatorname{ad}^*_ { \left( \frac{\delta h}{\delta \mu } , \frac{\delta h}{\delta a} \right) } \mu=0,
\]
for $\mu (t)\in \mathfrak{g}  ^\ast $ and $a (t)\in V ^\ast $. More explicitly, making use of the expression of the $ \operatorname{ad}^*$-operator in the semidirect product case, these equations read
\begin{equation}\label{SDP_LP} 
\partial _t \mu + \operatorname{ad}^*_{ \frac{\delta h}{\delta \mu }} \mu   +\frac{\delta h}{\delta a}\diamond a =0, \quad \partial _t a + a \frac{\delta h}{\delta \mu } =0,
\end{equation} 
where the operator $ \diamond : V \times V^\ast \rightarrow \mathfrak{g}  ^\ast $ is defined by 
\begin{equation}\label{diamond-def} 
\left\langle v \diamond a, \xi \right\rangle = -\left\langle a \xi ,v\right\rangle 
,
\end{equation}  
and $ a \xi\in V ^\ast  $ denotes the (right) Lie algebra action of $ \xi \in \mathfrak{g}  $ on $ a \in V ^\ast $.

We can easily extend the Casimir dissipation 
approach of \S\ref{GenTheory-sec} to the semidirect product case. Fixing a Casimir function $C=C( \mu , a)$ for the Lie-Poisson bracket on the semidirect product and a (possibly $( \mu , a)$-dependent) positive symmetric bilinear form $ \gamma _{( \mu ,a)}: \mathfrak{g}  \times \mathfrak{g}  \rightarrow \mathbb{R}$, we 
extend the semidirect-product Lie-Poisson system \eqref{SDP_LP} naturally to allow for Casimir dissipation by setting
\begin{equation}\label{SDP_LP_dissip} 
\partial _t \mu + \operatorname{ad}^*_{ \frac{\delta h}{\delta \mu }} \mu   +\frac{\delta h}{\delta a}\diamond a =\theta \operatorname{ad}^*_ { \frac{\delta h}{\delta \mu }  } 
 \left[ \frac{\delta C}{\delta \mu }, \frac{\delta h}{\delta \mu }\right] ^\flat, \quad \partial _t a + a \frac{\delta h}{\delta \mu } =0,
\end{equation} 
where $ \flat : \mathfrak{g}  \rightarrow\mathfrak{g}  ^\ast $ is the flat operator associated to $ \gamma $.
In Lie-Poisson form, this equation reads
\[
\frac{df(\mu, a) }{dt}=\left\{ f,h \right\} _+(\mu, a)
- \theta\, \gamma \left( \left[ \frac{\delta f}{\delta \mu } , \frac{\delta h}{\delta \mu } \right], \left[ \frac{\delta C}{\delta \mu } , \frac{\delta h}{\delta \mu } \right]\right),
\]
which shows that the energy is conserved while the Casimir dissipates as in \eqref{Casimirdissipation}. 

\paragraph{Kelvin-Noether theorem.}
To formulate the Kelvin-Noether theorem, we consider a $G$-equivariant map $\mathcal{K} : \mathcal{C} \times V ^\ast \rightarrow \mathfrak{g}  ^{\ast\ast} $,  i.e. $\left\langle  \mathcal{K} (gc, a g ^{-1} ),\operatorname{Ad}_{g^{-1} } ^\ast   \nu  \right\rangle =\left\langle \mathcal{K} (c,a ), \nu  \right\rangle, \forall\; g \in G$. Given $ c _0 \in \mathcal{C} $ and a solution $ \mu (t)$, $ a (t) $ of \eqref{SDP_LP_dissip}, the associated Kelvin-Noether quantity reads $\left\langle \mathcal{K} ( g(t) c _0 , a (t) ), \mu (t) \right\rangle $ and computations similar to those in Proposition \ref{KN_casimir} yield the Kevin-Noether theorem
\begin{equation}\label{KN_SDP} 
\frac{d}{dt} \left\langle \mathcal{K} ( g (t) c _0 , a (t) ), \mu (t) \right\rangle = \left\langle \mathcal{K} (g(t) c _0, a (t)  ), \theta\operatorname{ad}^*_ { \frac{\delta h}{\delta \mu }  } 
 \left[ \frac{\delta C}{\delta \mu }, \frac{\delta h}{\delta \mu }\right] ^\flat-\frac{\delta h}{\delta a}\diamond a\right\rangle.
\end{equation} 

\begin{remark}[Natural generalization]\rm Note that the Casimir dissipative LP equation \eqref{SDP_LP_dissip} was obtained from the Lie-Poisson equations on the semidirect product by modifying the momentum  $ \mu \in \mathfrak{g}  ^\ast $ only, while keeping the quantity $a\in V ^\ast $ unchanged. From the Lie algebraic point of view, however, the direct generalization of \eqref{Casimir_dissipation} to semidirect product Lie groups would be
\begin{equation}\label{Casimir_dissipation_SDP} 
\partial _t (\mu ,a)+ \operatorname{ad}^*_ { \left( \frac{\delta h}{\delta \mu } , \frac{\delta h}{\delta a} \right) } \mu = \theta \operatorname{ad}^*_ {\left(  \frac{\delta h}{\delta \mu } , \frac{\delta h}{\delta a} \right) } \left( \left[ \left( \frac{\delta C}{\delta \mu },\frac{\delta C}{\delta a }\right) , \left( \frac{\delta h}{\delta \mu },  \frac{\delta h}{\delta a } \right) \right] ^\flat  \right),
\end{equation} 
where the flat operator $ \flat : \mathfrak{g}  \times V \rightarrow \mathfrak{g}  ^\ast \times V ^\ast $ is associated to a positive symmetric bilinear map $ \gamma_{( \mu , a)} :(\mathfrak{g}  \times V) \times ( \mathfrak{g}  \times V) \rightarrow \mathbb{R}$. This results in a modification of \emph{both} $ \mu $ and $ a$ as
\[
\widetilde \mu = \mu - \theta \left[ \frac{\delta C}{\delta \mu }, \frac{\delta h}{\delta \mu }\right] ^\flat , \quad \widetilde a= a- \theta  \left(\frac{\delta C}{\delta a} \frac{\delta h}{\delta \mu }-\frac{\delta h}{\delta a} \frac{\delta C}{\delta \mu }  \right) ^\flat.
\]
Of course, as before, the modified 
semidirect-product Lie-Poisson system \eqref{Casimir_dissipation_SDP} 
introduces dissipation of the Casimir $C$ while keeping energy conserved. This system recovers \eqref{SDP_LP_dissip} in the case when the bilinear form $ \gamma $ vanishes on $V$.
\end{remark} 

\paragraph{Lagrange-d'Alembert variational principle for semidirect products.} As in \eqref{EP_Casimir}, equations \eqref{SDP_LP_dissip} for Casimir dissipation in the semidirect product case can be expressed from the Lagrangian $\ell: \mathfrak{g}  \times V ^\ast \rightarrow \mathbb{R}  $ associated to $h$ via the Legendre transformation
\begin{equation}\label{Leg_transf_SDP} 
h( \mu,a)= \left\langle \mu , \xi \right\rangle - \ell( \xi,a ), \quad  \mu := \frac{\delta \ell}{\delta \xi}.
\end{equation} 
When $ \theta =0$, the constrained variational principle associated to the equations reads
\[
\delta \int_0 ^T  \ell( \xi ,a) dt=0, \quad\text{for variations} \quad\delta \xi = \partial _t \zeta -[ \xi , \zeta ], \;\;\; \delta a=- a \zeta ,
\]
where $ \zeta \in \mathfrak{g}  $ is an arbitrary curve vanishing at $t=0,T$, see \cite{HMR1998}.
As one may verify directly, when $ \theta \neq 0$, the dissipative Casimir force can be included in the Euler-Poincar\'e equations by considering the Lagrange-d'Alembert variational principle
\begin{equation}\label{LdA_SDP} 
\delta \left[ \int_0 ^T \ell( \xi ,a) dt\right] + \theta \int_0 ^T \gamma \left( \left[ \frac{\delta C}{\delta \mu } , \xi \right] , [ \xi ,\zeta] \right) dt  =0,
\end{equation} 
for variations $\delta \xi = \partial _t \zeta -[ \xi , \zeta ]$, $\delta a=- a \zeta$.

\subsection{Rotating shallow water (RSW) flows}\label{RSW-sec}

On the semidirect product of Diff$(\mathcal{D} )$ with functions $\eta \in\mathcal{F}(\mathcal{D} )$, where Diff$(\mathcal{D} )$ is the group of diffeomorphisms of a two dimensional domain $\mathcal{D} $, the Lie-Poisson equations \eqref{SDP_LP} become
\begin{equation}\label{compressible_LP} 
\partial _t \left( \frac{\mathbf{m} }{\eta }\right) + \operatorname{curl}  \left( \frac{\mathbf{m} }{\eta }\right) \times \frac{\delta h}{\delta \mathbf{m} }  + \nabla \left( \frac{\delta h}{\delta \mathbf{m} }  \cdot \frac{\mathbf{m} }{\eta }+ \frac{\delta h}{\delta \eta } \right) = 0,\quad \partial _t \eta + \operatorname{div} \left(\eta  \frac{\delta h}{\delta \mathbf{m} }  \right) =0.
\end{equation} 
The Hamiltonian for the rotating shallow water (RSW) equations is
\begin{equation}\label{Ham_SW} 
h( \mathbf{m} , \eta) = \int_ \mathcal{D} \left( 
\frac{1}{2\eta}| \mathbf{m}-\eta \mathbf{R}  | ^2 + \frac{1}{2} g(\eta-\bar{D}) ^2\right) 
\,dxdy\,,
\end{equation} 
where $ \operatorname{curl} \mathbf{R} = 2 \boldsymbol{\Omega}  $ is the Coriolis parameter, $\eta$ is the total depth of the bottom topography in the two dimensional domain with spatial coordinates $(x,y)$ and $\bar{D}(x,y)$ is the mean depth.   
Indeed, inserting this Hamiltonian into the Lie-Poisson equation \eqref{compressible_LP} yields the rotating shallow water equations for the fluid velocity $\mathbf{u}=\mathbf{m}/\eta-\mathbf{R}$,
\begin{equation}\label{motion_SW} 
\partial _t \mathbf{u} + \operatorname{curl} (\mathbf{u} + \mathbf{R} )\times \mathbf{u} + \nabla \left(\frac{1}{2} | \mathbf{u} | ^2 + g(\eta-\bar{D}) \right) =0
, \qquad 
\partial _t \eta + \operatorname{div}( \eta \mathbf{u} )=0. 
\end{equation} 
The second term may be written as $ \eta^{-1}\operatorname{curl} (\mathbf{u} + \mathbf{R} ) \times \eta\mathbf{u}= q \mathbf{\hat{z}} \times \eta \mathbf{u} $, with the potential vorticity (PV) $q$ given by
\[
q 
= \eta^{-1} \mathbf{\hat{z}}\cdot \operatorname{curl} (\mathbf{m}/\eta)
= \eta^{-1} \mathbf{\hat{z}}\cdot \operatorname{curl} (\mathbf{u} + \mathbf{R} )
\]
As a consequence of the shallow water equations, the potential vorticity $q$ is advected (conserved on fluid parcels), that is
\[
\partial _tq  + \mathbf{u}\cdot \nabla q =0
\,.
\]
Of course, the conservation of PV is not limited to the shallow water equations. In fact, it holds for the equations derived from this LP structure for \emph{any} Hamiltonian. This is because the  LP structure admits PV in a family of Casimir functions that Poisson commute with every functional of the variables $(\mathbf{m},\eta)$.

\paragraph{Casimir functions for RSW.}
One class of Casimir functions for RSW is given by
\[
C _{\Phi } ( \mathbf{m} , \eta ) 
= \int_ \mathcal{D} \eta \,\Phi \left(q  \right)dxdy
 , \quad\text{where} \quad   
q = \eta^{-1} \mathbf{\hat{z}} \cdot \operatorname{curl} \left(\mathbf{m}/\eta\right),
\]
for any smooth function $\Phi$. We will compute the Casimir dissipative version of RSW associated to the Casimir function
\begin{equation}\label{Casimir_RSW} 
C( \mathbf{m} , \eta )=\frac{1}{2}  \int_ \mathcal{D} \eta  \,q^2dxdy \,,
\end{equation} 
and the inner product 
\begin{equation}\label{inner_eta}
\gamma_{\eta } ( \mathbf{u} , \mathbf{v} )=\int_ \mathcal{D} \eta \,( \mathbf{u} \cdot \mathbf{v} )dx dy
\,.
\end{equation}
The flat operator associated to this inner product is thus $ \mathbf{v} ^\flat = \eta \mathbf{v}\cdot d \mathbf{x}  $.
From the abstract Lie algebraic formulation \eqref{SDP_LP_dissip} we obtain
\begin{equation}\label{modifiedSW} 
\partial _t \mathbf{u} + \operatorname{curl}( \mathbf{u} +\mathbf{R} 
+ \mathbf{A}) \times \mathbf{u} 
+ \nabla \left( 
\frac{1}{2} | \mathbf{u} | ^2 + g( \eta -\bar D) 
+
\mathbf{u} \cdot \mathbf{A}\right) 
=0
\,,
\end{equation}
where the vector $\mathbf{A}$ is defined as
\begin{equation}\label{Avec-def}
\mathbf{A}:=\theta [\mathbf{u} ,\mathbf{\hat{z}} \times (\nabla q)/\eta  ]
\,.
\end{equation}

Equations (\ref{modifiedSW}) and (\ref{Avec-def}) are obtained from the Lie-Poisson equations \eqref{compressible_LP}, where in the second and third terms the expressions ${\mathbf{m}}/{ \eta } $ have been replaced by ${\tilde{ \mathbf{m} }}/{ \eta }$, where
\[
\mathbf{\widetilde{m}}
= \mathbf{m} -\theta \left[ \frac{\delta h}{\delta \mathbf{m} }, \frac{\delta C}{\delta \mathbf{m} } \right] ^\flat
= \mathbf{m} + \mathbf{A}^\flat =\mathbf{m} + \eta\mathbf{A},
\]
as dictated by the Lie-Poisson Casimir dissipation approach in formula \eqref{SDP_LP_dissip}. We have used the  formula
$\delta C/ \delta \mathbf{m} =  \operatorname{curl} \left(q \mathbf{\hat{z}} \right) / \eta  = -\mathbf{\hat{z}} \times (\nabla q)/ \eta$ for the variational derivative of the Casimir \eqref{Casimir_RSW} and have kept in mind that the Lie algebra bracket appearing in \eqref{SDP_LP_dissip} is minus the Lie bracket of vector fields. 
The last term of \eqref{modifiedSW} is computed as follows
\begin{align*} 
 \frac{\delta h}{\delta \mathbf{m} }  \cdot \frac{\tilde{\mathbf{m} }}{\eta}+ \frac{\delta h}{\delta \eta}& = \mathbf{u} \cdot \left( \mathbf{u}+ \mathbf{R}  + \theta \left[\mathbf{u} ,\mathbf{\hat{z}} \times (\nabla q )/ \eta \right] \right) - \frac{1}{2} |\mathbf{u} | ^2 - \mathbf{u} \cdot \mathbf{R} +g( \eta -\bar D)\\
 &= \frac{1}{2} | \mathbf{u} | ^2 + g( \eta -\bar D) 
+
\theta \mathbf{u} \cdot [\mathbf{u} ,\mathbf{\hat{z}} \times (\nabla q)/\eta]
\\&=
 \frac{1}{2} | \mathbf{u} | ^2 + g( \eta -\bar D) 
+
\mathbf{u} \cdot \mathbf{A}
,
\end{align*}
where we have used the definition of $\mathbf{A}$ in equation \eqref{Avec-def} and the relation $\mathbf{u}=\mathbf{m}/\eta-\mathbf{R}$ between the velocity and momentum in the rotating case.

A short calculation using equation \eqref{modifiedSW} yields
\begin{equation}\label{modif_PV}
\frac{Dq}{Dt} = \eta^{-1}\mathbf{\hat{z}} \cdot \operatorname{curl}
 \left(\mathbf{u} \times 
  \operatorname{curl} \mathbf{A} \right)
\quad\hbox{with}\quad
\frac{D}{Dt} := \partial _t + \mathbf{u}\cdot\nabla
\,. \end{equation}
This model preserves the energy in \eqref{Ham_SW} and dissipates the enstrophy Casimir $ \frac{1}{2} \int_ \mathcal{D}\eta\, q ^2 dxdy$ as
\begin{align*}
\frac{d}{dt} \frac{1}{2} \int_ \mathcal{D} \eta \, q ^2 dxdy
= -\, \theta \int_ \mathcal{D} \eta \big|[\mathbf{\hat{z}} \times (\nabla q)/\eta \,,\, \mathbf{u} ]\big|^2 dxdy
= -\, \theta^{-1} \int_ \mathcal{D} \eta |\mathbf{A}|^2 dxdy\,.
\end{align*}

\paragraph{Kelvin circulation theorem.} A convenient way to derive the Kelvin circulation theorem \eqref{KN_SDP} is to rewrite the Lie-Poisson equations \eqref{compressible_LP} using the duality pairing with one-forms. We get
\begin{equation}\label{LP_RSW_oneform} 
\partial _t \left( \frac{\alpha }{\eta }\right)   
+ \pounds _{ \frac{\delta h}{\delta \alpha }} \left( \frac{\alpha }{\eta }\right) 
+ d\left(\frac{\delta h}{\delta \eta }\right)
=0  
\,, \qquad 
\partial _t \eta +\operatorname{div} \left(\eta  \frac{\delta h}{\delta\alpha }  \right )=0
\,,\end{equation} 
in which  $\alpha$ denotes the 1-form $\alpha=\mathbf{m} \cdot d\mathbf{x}$
and $\mathcal{L}_{\delta h/\delta \alpha}$ denotes the Lie derivative with respect to the vector field $\delta h/\delta \alpha$, as defined for 3D earlier in \eqref{Lie-der3D-def}.
The Casimir dissipative LP equation is obtained by replacing $ \alpha $ by $\tilde{ \alpha }= \tilde{ \mathbf{m} }\cdot d \mathbf{x} $, in  the second term. 
Consequently, a direct computation yields
\begin{equation}\label{gen2D_Kelcircthm} 
\frac{d}{dt} \oint_{c(u)} \frac{\alpha }{\eta }=  \theta \oint_{c(u)} \mathcal{L} _{ \frac{\delta h}{\delta \alpha }}  \left( \frac{1 }{\eta }\left[ \frac{\delta h}{\delta \alpha }, \frac{\delta C}{\delta \alpha }\right] ^\flat\right) 
=\theta \oint_{c(u)} \mathcal{L} _{ \frac{\delta h}{\delta \alpha }}\left(  \left[ \frac{\delta h}{\delta \alpha }, \frac{\delta C}{\delta \alpha }\right]\cdot d \mathbf{x} \right)  ,
\end{equation}
consistently with \eqref{KN_SDP}, where we recall that the Lie algebraic bracket is minus the Lie bracket of vector fields. For RSW, the previous equation produces the following Kelvin circulation theorem
\begin{equation}\label{modRSW_Kelcircthm} 
\frac{d}{dt} \oint_{c(u)}( \mathbf{u} + \mathbf{R} ) \cdot d\mathbf{x} 
 = - \oint_{c(u)} \mathcal{L}  _ u
 \left( \mathbf{A} \cdot d \mathbf{x} \right)
 =
  \oint_{c(u)} \left(\mathbf{u} \times 
  \operatorname{curl} \mathbf{A} \right) \cdot d \mathbf{x}   
  ,
\end{equation}
with vector $\mathbf{A}$ defined in equation \eqref{Avec-def}.

\paragraph{Modified PV advection.} One way to obtain the modified advection equation \eqref{modif_PV} for PV, is to take the exterior differential of the first equation in \eqref{LP_RSW_oneform} (with $ \alpha $ replaced by $ \tilde{ \alpha }$ in the second term) and obtain
\[
\partial _t d\left( \frac{\alpha }{\eta } \right) 
+ \mathcal{L} _u d\left( \frac{\alpha }{\eta } \right)
= -\, \mathcal{L} _u d(\mathbf{A}\cdot d \mathbf{x} ).
\]
From this, and from the second equation in \eqref{LP_RSW_oneform} one obtains
\[
\partial _t q + \mathbf{u} \cdot \nabla q
=  \frac{1}{\eta }  \mathbf{\hat{z}} \cdot 
\operatorname{curl} \,(
\mathbf{u}\times \operatorname{curl}\,
\mathbf{A}),
\]
by making use of the formulas
\[
d\left( \frac{\alpha }{\eta }\right) 
= q\,\eta\, dxdy
\quad \text{and} \quad 
d( \mathbf{m}  \cdot d\mathbf{x} )= (\mathbf{\hat{z}} \cdot \operatorname{curl} \mathbf{m} )\, dxdy
\]
for one-forms $ \alpha $ and vector fields $ \mathbf{m} $ on the two dimensional domain $ \mathcal{D} $.

\paragraph{Lagrange-d'Alembert variational principle.} From the Legendre transform \eqref{Leg_transf_SDP}, one obtains the RSW Lagrangian
\[
\ell( \mathbf{u} , \eta )= \int_ \mathcal{D}\left[  \eta \left(\frac{1}{2} | \mathbf{u} | ^2 + \mathbf{u} \cdot \mathbf{R}  \right) - \frac{1}{2} g( \eta -\bar D) \right] dxdy
\,.
\]
From \eqref{LdA_SDP} we get the Lagrange-d'Alembert variational principle,
\[
\delta \int_0^T \ell( \mathbf{u} , \eta ) dt+ \theta \int_0^T\int_ \mathcal{D} \eta ([ \mathbf{u} , \mathbf{\hat{z}} \times (\nabla q)/ \eta ] \cdot [ \mathbf{u} , \mathbf{v} ] )dxdy
=0\,,
\]
This variational principle yields the Casimir dissipation constraint force in RSW, i.e., the terms proportional to $\theta$ in the modified RSW motion equation \eqref{modifiedSW},
for constrained variations $\delta \mathbf{u}  = \partial _t \mathbf{v}  +[ \mathbf{u}  , \mathbf{v}  ]$, $\delta \eta =- \operatorname{div}( \eta \mathbf{v} )$, where $ \mathbf{v} $ is and arbitrary time dependent vector field vanishing at $t=0, T$, associated to a virtual displacement $ \delta \varphi $ of the fluid configuration $ \varphi $.

\subsection{3D rotating Boussinesq flows}

\paragraph{Boussinesq equations for rotating stratified incompressible 3D flows.}
On the semidirect product of SDiff$(\mathbb{R}^3)$ with functions $b\in\mathcal{F}(\mathbb{R}^3)$, where SDiff$(\mathbb{R}^3)$ is the group of volume preserving diffeomorphisms of $ \mathbb{R}  ^3 $, the Lie-Poisson equations \eqref{SDP_LP} become
\begin{equation}\label{SDP_Boussinesq} 
\partial _t \mathbf{q} 
+ \operatorname{curl} \left( \mathbf{q}  \times \operatorname{curl}\frac{\delta h}{\delta \mathbf{q}  }\right) - \nabla \frac{\delta h}{\delta b }\times  \nabla b =0, \qquad 
\displaystyle\partial _t b + \operatorname{curl}\frac{\delta h}{\delta \mathbf{q}  }\cdot \nabla b  =0.
\end{equation}
For rotating 3D Boussinesq fluids, the Hamiltonian is
\begin{equation}\label{erg_Boussinesq} 
h( \mathbf{q} , b)= \int\left(  \frac{1}{2} | \operatorname{curl} ^{-1} \mathbf{q}  - \mathbf{R} | ^2 + bz \right)  d ^3 x,
\end{equation}
where $b$ is the scalar buoyancy and $ \operatorname{curl} \mathbf{R} = 2 \Omega $ is the Coriolis parameter. The variational derivatives of the Hamiltonian $h( \mathbf{q} , b)$ are
\[
\operatorname{curl} \frac{\delta h}{\delta \mathbf{q}  } = \operatorname{curl} ^{-1} \mathbf{q}  - \mathbf{R}   = \mathbf{u} , \qquad \frac{\delta h}{\delta b} = z
\,,\]
and the vorticity is $ \boldsymbol{\omega} = \operatorname{curl} \mathbf{u} =\mathbf{q} - \operatorname{curl} \mathbf{R} = \mathbf{q} - 2 \boldsymbol{\Omega}$.
In this case, the Lie-Poisson equations \eqref{SDP_Boussinesq} reduce to the rotating 3D Boussinesq equations for vorticity
\begin{equation}\label{rotBoussinesq-vort} 
\partial _t \boldsymbol{\omega} + \operatorname{curl} \left( \left( \boldsymbol{\omega} + 2 \boldsymbol{\Omega} \right) \times \mathbf{u} \right)  - \mathbf{\hat{z}} \times \nabla b=0, \qquad \partial _t b + \mathbf{u} \cdot \nabla b  =0
\,,
\end{equation}
or, in velocity form
\begin{equation}\label{rotBoussinesq-vel} 
\partial _t \mathbf{u} + (\operatorname{curl} \mathbf{u} + 2 \boldsymbol{\Omega} ) \times \mathbf{u}  + \mathbf{\hat{z}} b=- \nabla p,
\end{equation}
where $p$ is the pressure, found by enforcing incompressibility,  ${\rm div}\,\mathbf{u}=0$.

\begin{remark}[Rotating 3D Boussinesq Casimir functions]\rm
The Lie-Poisson equations \eqref{SDP_Boussinesq} admit the Casimir functions
\[
C_{ \Phi }(\mathbf{q} , b )
=\int _ \mathcal{D} \Phi (q, b ) \,d ^3 x
, \quad \hbox{where} \quad
q
:= \nabla b \cdot \mathbf{q} 
\quad \hbox{is the potential vorticity.} 
\]
\end{remark}

\paragraph{Casimir dissipation.}
We shall now introduce the dissipative Casimir effect for the 3D rotating Boussinesq system \eqref{SDP_Boussinesq} by using the Lie algebraic setting developed above. Choosing the \emph{enstrophy} Casimir for potential vorticity 
\begin{equation}\label{Casimir_3D} 
C(\mathbf{q}, b )=\frac12\int_ \mathcal{D} q ^2  d^3x,
\end{equation} 
we have
\[
\operatorname{curl} \frac{\delta C}{\delta \mathbf{q} }
= \operatorname{curl}( q \nabla b) , \quad \frac{\delta C}{\delta b } 
= - \operatorname{div}(q \mathbf{q}  )
\,.\]
Consequently, choosing the inner product $ \gamma $ associated to a given positive self-adjoint differential operator $ \Lambda $, the modified momentum \eqref{Casimir_dissipation_tilde} is
\[
\mathbf{\widetilde{q}}
= \mathbf{q} - \theta \operatorname{curl} \Lambda \left[\mathbf{u},\operatorname{curl}( q \nabla b)\right]
= \mathbf{q} - \theta L(\operatorname{curl}( q \nabla b)\times \mathbf{u})
:= \mathbf{q} + \operatorname{curl}\mathbf{A}, 
\]
where the differential operator $\Lambda$ verifies $ \Lambda = {\rm curl} ^{-1} \,L\ {\rm curl} ^{-1} $ and $L = {\rm curl}\,\Lambda\,{\rm curl}$, so $\operatorname{div}\mathbf{\widetilde{q}}=0$, with 
\begin{equation}\label{curlA-def} 
\operatorname{curl}\mathbf{A}
:=
\theta L( \mathbf{u} \times \operatorname{curl}( q \nabla b))
=
\mathbf{\widetilde{q}} - \mathbf{q}
\,.
\end{equation} 

\paragraph{Potential vorticity.}
The dissipative Casimir Lie-Poisson system 
\begin{equation}\label{DissipCasimir_3D} 
\partial _t \mathbf{q} 
+ \operatorname{curl} \left(\mathbf{\widetilde{q}}  \times \operatorname{curl}\frac{\delta h}{\delta \mathbf{q}  }\right) - \nabla \frac{\delta h}{\delta b }\times  \nabla b =0
, \quad \partial _t b+ \mathbf{u} \cdot \nabla b=0
\,,
\end{equation} 
implies the following equation for the potential vorticity, given by the projection $q=\mathbf{q}\cdot \nabla{b}$ of the total vorticity $\mathbf{q}$ onto a vector normal to a level set of buoyancy $b$,
\begin{equation}\label{BoussPV_3D} 
\frac{D}{Dt}(\mathbf{q}\cdot \nabla{b})
+ 
\mathbf{q}\cdot \nabla \frac{Db}{Dt}
=
\nabla{b} \cdot
\operatorname{curl} \left(\mathbf{u} \times 
  \operatorname{curl} \mathbf{A} \right)
\quad\hbox{with}\quad
\frac{Db}{Dt} = 0
\,.
\end{equation} 
This means that the dissipative Casimir system \eqref{DissipCasimir_3D} \emph{does not conserve} the standard Boussinesq potential vorticity along Lagrangian fluid parcels. That is, ${Dq}/{Dt} \ne 0$ for this system. This follows because the Ertel theorem no longer holds for  fluid flows with  Casimir dissipation. 

\paragraph{Energy and enstrophy.} 
In velocity form, the system \eqref{DissipCasimir_3D} is given by
\begin{equation}\label{BoussVort_3D} 
\partial _t \mathbf{u} 
-  \mathbf{u} \times
\operatorname{curl}\left( \mathbf{u}+ \mathbf{R}  + \mathbf{A} \right) 
+ b\,\mathbf{\hat{z}}
=- \nabla p
, \quad \partial _t b+ \mathbf{u} \cdot \nabla b=0.  
\end{equation} 
The first of these two equations can be rewritten, in terms of $L$ as
\begin{equation}\label{3D_velocity_form}
\partial _t \mathbf{u} 
-  \mathbf{u} \times
\operatorname{curl}\left( \mathbf{u}+ \mathbf{R}   \right) 
-  \mathbf{u} \times \big(
\theta L( \mathbf{u} \times \operatorname{curl}( q \nabla b))\big)
+ b\,\mathbf{\hat{z}}
=- \nabla p.
\end{equation} 
This system preserves the energy $h( \mathbf{q} , b)$ of the 3D rotating Boussinesq fluid in \eqref{erg_Boussinesq} and dissipates the enstrophy as
\begin{equation}\label{KN_3D_Boussinesq} 
\frac{d}{dt} \frac{1}{2} \int_ \mathcal{D} q ^2 d ^3 x= -\, \theta\left  \|\big[ \operatorname{curl}(q \nabla b ) ,\, \mathbf{u} \big]\right \| ^2 _ \Lambda =  -\, \theta\left  \|\operatorname{curl}(q \nabla b ) \times \mathbf{u}\right \| ^2 _ L
\,,
\end{equation} 
where we have used the identity (\ref{LLnorms-id}).

\paragraph{Kelvin circulation theorem.} A convenient way to derive the Kelvin circulation theorem \eqref{KN_SDP} for rotating Boussinesq flows with Casimir dissipation is to rewrite the modification of equations \eqref{SDP_Boussinesq} using duality pairing with one-forms. In this case, we have
\[
\partial _t \alpha + \mathcal{L} _{ \frac{\delta h}{\delta \alpha }} \tilde { \alpha } - \frac{\delta h}{\delta b} db=-d p ,
\]
in which  $\alpha$ denotes the 1-form $\alpha=\balpha\cdot d\mathbf{x}$ or, more explicitly,
\[
\partial _t \alpha + \mathcal{L} _{ \frac{\delta h}{\delta \alpha }} \alpha = \theta \mathcal{L} _{ \frac{\delta h}{\delta \alpha }} \left[ \frac{\delta h}{\delta \alpha }, \frac{\delta C}{\delta \alpha }\right] ^\flat    +\frac{\delta h}{\delta b} d b-dp
\,.\]
Consequently, a direct computation yields
\[
\frac{d}{dt} \oint_{c(u)} \alpha =  \theta \oint_{c(u)} \mathcal{L} _{ \frac{\delta h}{\delta \alpha }} \left[ \frac{\delta h}{\delta \alpha }, \frac{\delta C}{\delta \alpha }\right] ^\flat -\oint_{c(u)} b\,d \frac{\delta h}{\delta b}
\,,
\]
consistently with \eqref{KN_SDP}, where we again recall that the Lie algebraic bracket is minus the Lie bracket of vector fields.
For rotating 3D Boussinesq, upon using the enstrophy as a Casimir function and employing the differential operator $ \Lambda $, one finds
\begin{align*} 
\frac{d}{dt} \oint_{c(u)}( \mathbf{u} + \mathbf{R} ) \cdot d\mathbf{x}  
&=  \theta \oint_{c(u)} \mathcal{L} _{\mathbf{u} } \left(\Lambda  \left[ \mathbf{u} , \operatorname{curl}(q \nabla b ) \right]\cdot d \mathbf{x} \right)   
-\oint_{c(u)} b\, dz
\\&=
-\,\theta \oint_{c(u)} \mathbf{u} \times L\left(\operatorname{curl}(q \nabla b ) \times \mathbf{u} \right) \cdot d \mathbf{x}   -\oint_{c(u)} b\, dz,
\\&=
 \oint_{c(u)} \left(\mathbf{u} \times 
 \operatorname{curl} \mathbf{A} \right) \cdot d \mathbf{x}   
 -\oint_{c(u)} b\, dz,
\end{align*} 
Setting $ \theta =0$ recovers the usual Kelvin circulation theorem for rotating Boussinesq flows.

\noindent
Here we have used the expression \eqref{erg_Boussinesq} for the Hamiltonian in terms of $ \balpha $ and $b$ to compute variational derivatives, that is,
\[
h( \alpha , b)
= \int_ \mathcal{D}\left(  \frac{1}{2} | \balpha - \mathbf{R} | ^2 + bz \right) d^3x
, \quad 
\frac{\delta h}{\delta \balpha }= \mathbf{u}
 , \quad\text{with} \quad   
 \balpha = \mathbf{u} + \mathbf{R} 
 \,.
\]

\paragraph{Lagrange-d'Alembert variational principle for rotating Boussinesq flows.} Proceeding exactly as in the preceding example for RSW and using the abstract formulation \eqref{LdA_SDP}, one obtains the Lagrange-d'Alembert variational principle  for Boussinesq flows,
\[
\delta \int_0^T \ell( \mathbf{u} , b ) dt+ \theta \int_0^T\int_ \mathcal{D} ( \operatorname{curl}(q \nabla b) \times  \mathbf{u} ) \cdot L( \mathbf{u} \times \mathbf{v} ) dxdy = 0,
\]
for variations $\delta \mathbf{u}  = \partial _t \mathbf{v}  +[ \mathbf{u}  , \mathbf{v}  ]$, $\delta b =- \mathbf{v} \cdot \nabla b$. Here $ \mathbf{v} $ is an arbitrary time dependent vector field vanishing at $t=0, T$, that is associated to a virtual displacement $ \delta \varphi $ of the fluid configuration $ \varphi \in \operatorname{SDiff}( \mathbb{R}  ^3 )$. Also, $\ell$ is the Lagrangian of the 3D rotating Boussinesq fluid, given by
\[
\ell( \mathbf{u} , b)= \int_ \mathcal{D} \left( \frac{1}{2} | \mathbf{u} | ^2 + \mathbf{u} \cdot \mathbf{R} - bz\right) d ^3 x.
\]

\begin{remark}[Impermeability theorem]\rm
In the Casimir dissipation approach for rotating 3D Boussinesq fluids, the vector $\operatorname{curl}(q \nabla b ) = \nabla q\times\nabla b$ in equations \eqref{3D_velocity_form} and \eqref{KN_3D_Boussinesq} plays the same role that the vorticity $ \boldsymbol{\omega} $ played for ideal incompressible 3D in equations \eqref{3D_Casimir_velocity} and \eqref{Helicitysquared-dissip}. The vector $\nabla q\times\nabla b$ also plays a role in the \emph{impermeability theorem} of \cite{HMc90}, which applies, for example, in the dynamics of intersections of level sets of potential temperature and potential vorticity in the atmospheric tropopause. See also, \cite{GiHo2010,GiHo2011} for an additional mathematical treatment of rotating 3D Boussinesq dynamics.

In the Casimir dissipation approach, the impermeability theorem no longer holds. Instead, one finds from (\ref{BoussPV_3D}) that 
\begin{equation}\label{Bouss-dq^db} 
\partial_t(q \nabla b) - \mathbf{u} \times \operatorname{curl}(q \nabla b)
+ \nabla (q\mathbf{u}\cdot\nabla b)
=
\left(\frac{Dq}{Dt}\right)\nabla b
=
\big(\nabla{b} \cdot
\operatorname{curl} \left(\mathbf{u} \times 
  \operatorname{curl} \mathbf{A} \right)\big)\nabla b
\,.
\end{equation} 
Taking the curl of this equation shows that intersections of level sets of $q$ and $b$ are not frozen into the flow, which contradicts the departure point for the impermeability theorem  of \cite{HMc90}.

\end{remark}


\section{Conclusion} 
\label{conclusion-sec}

This paper has introduced a theory of selective decay by Casimir dissipation which applies widely in fluids and suggests a connection between selective decay and the design of numerical methods, since it recovers some of the previous methods, such as the anticipated vorticity method.

Interestingly, while the enstrophy $C_2$ in (\ref{enstrophy-def}) is the appropriate Casimir for selective decay in 2D incompressible flow, the appropriate Casimir for that role in 3D incompressible flow is the \emph{square} of the helicity $C$ in (\ref{helicity-def}).

Perhaps one reason for these associations with different Casimirs in 2D and 3D can be suggested by comparing the Lie-Poisson brackets in terms of the vorticity in 2D and 3D. In particular, these brackets may be written in terms of the Casimirs, the enstrophy $C_2$ in 2D and the helicity $C$ in 3D. In fact, the Lie-Poisson brackets in 2D and 3D may both be written in terms of the corresponding  Casimirs in the same triple product form \cite{Nambu1973}. Namely,
\[
\{f,\,h\}_2 (\omega)
= \int_\mathcal{D}\frac{\delta C_2}{\delta \omega}  \left[ \frac{\delta f}{\delta \omega } , \frac{\delta h}{\delta \omega } \right] 
\,dxdy
\quad\hbox{and}\quad
\{f,\,h\}_3 (\bom)
= \int_\mathcal{D}
{\rm curl}\,\frac{\delta C}{\delta \bom }
\cdot  {\rm curl}\,\frac{\delta f}{\delta \bom } 
\times {\rm curl}\,\frac{\delta h}{\delta \bom }
\,d^3x
\,.
\]
These formulas suggest the fundamental roles of the two different Casimirs, the enstrophy $C_2$ in 2D for which $\delta C_2/\delta \omega = \omega$ and the helicity $C$ in 3D for which $\delta C/\delta \bom = \mathbf{u}$. One could imagine using other constants of motion instead of the Casimirs, but these would result by Noether's theorem from particular symmetries and their use would be restricted to those subcases. 

The loss of Kelvin's theorem, as well as the losses of potential vorticity conservation and the Casimir conservation laws as a result of modifying the equations to impose selective decay all would be consistent with the effects of viscosity, but of course the preservation of energy is not. This, however, is the essence of selective decay, which is seen in turbulence on times scales less than the time scale for viscous decay of energy.

The phenomenon of selective decay is thought to arise from a multiscale  interaction between disparate large and small scales in which the large scales are regarded as a type of coarse graining, or perhaps the envelope, of the small scale motions,  \cite{MiPoSu2008}. This type of model is consistent with other recently developed multiscale turbulence models, particularly that in \cite{HoTr2012}, in which the large scale motion is regarded as a \emph{Lagrange coordinate} for the small scale motion. 

Thus, a phenomenological point of view exists in which the effect of selective decay of Casimirs that depend on gradients of the velocity may be regarded usefully as providing a type of coarse-graining that is similar to the effect of a turbulent viscosity, but still conserves energy. In this regard, see the work of  \cite{VaHu1988,GrSa1989,LeDuCh2006} that investigates the emergence of large-scale flows under weak viscous decay.

This interpretation of selective decay arising from coarse graining may also suggest a connection between selective decay and the design of numerical methods. In particular, one may wish to design numerical methods whose solutions imitate selective decay without adding too much linear dissipation of energy and momentum due to viscosity. And from our discussion here it seems that the method of anticipated vorticity in \cite{SaBa1981,SaBa1985} may be one of those numerical methods. 

The theoretical approach presented here may lead to classes of other numerical methods based on selective Casimir decay that would generalize and extend the applicability of the method of anticipated vorticity to other areas of fluid dynamics. A recent discussion of numerical methods based on anticipated vorticity for 2D flows is given in \cite{ChGuRi2011}. A ``scale-aware'' version of the anticipated vorticity method in 3D should also be readily obtainable, by allowing the value of its time-scale parameter $\theta$ to depend on local mean properties of the solution. However, this is beyond the scope of the present work.

If selective Casimir decay or the anticipated vorticity method does mimic the effect of linear Navier-Stokes viscosity, but at a convenient time scale for numerical simulations, then one must be cautious about adding in other types of viscosity, such as the artificial viscosity used in Large Eddy Simulations to mimic the effects of turbulence. 
The introduction of other types of turbulent viscosities such as Smagorinsky tensor diffusivity in \emph{addition} to anticipated vorticity for selective decay could  be seen as double counting. 

In contrast, the present selective decay approach is complementary to the introduction of nondissipative regularization through nonlinear dispersion as a modification of the kinetic energy Hamiltonian, as accomplished in the Navier-Stokes alpha (NS-alpha) turbulence model \cite{FoHoTi2001,FoHoTi2002} or its alternative form in \cite{Ve2008}, both of which conserve energy, enstrophy (in 2D) and helicity. For example, one may easily apply the Casimir-dissipation approach to the Euler-alpha model, because the latter is Lie-Poisson in the same way as the Euler fluid equations. Moreover, the original derivation of the NS-alpha turbulence model by the method of Lagrangian averaging makes it clear that the model is already  ``scale-aware'' because the length scale alpha in that derivation is the correlation length for Lagrangian fluctuations of the fluid parcel paths. The NS-alpha model can also be made \emph{actively} scale aware by allowing the length scale alpha to evolve with the flow. 

In regard to other modifications of the fluid nonlinearity, the Casimir-dissipation approach is also complementary to the \cite{CrLe1976} approach, which introduces a vortex force induced by Stokes drift. See \cite{Ho1996} for an analysis of the Craik--Leibovich approach from the Lie-Poisson viewpoint. The Casimir-dissipation approach also differs from the theory of bolus transport by fluctuations of \cite{GeMcW1996}, in that only one transport velocity is present in the Casimir-dissipation approach. 

\paragraph{Outlook.}
Future applications of our approach here would naturally follow recent work in ocean dynamics that interprets selective decay as a mechanism for parameterizing the interactions between disparate scales, for example, between large coherent oceanic flows and the much smaller eddies directly, as done in \cite{MaAd2010}, instead of relying on the slower, indirect effects of viscosity. 
The Lie-Poisson structures in the selective decay models presented here suggest a framework in which numerical schemes for their computational simulations may be designed. This framework may turn out to be similar to the ideas in \cite{Sa2005} about using the triple product Nambu form to design simulation algorithms for certain grid structures used in oceanography. In addition, Voronoi mesh methods that were developed for anticipated vorticity dynamics may turn out to be naturally adaptable to the the selective decay models presented here. See. e.g., \cite{ChGuRi2011,ChGuRi2012}. However, applications of these ideas for numerical simulation algorithms within the Casimir dissipation modelling framework are beyond our present scope and will be pursued elsewhere.

Analytical issues such as whether the 3D selective decay equations admit existence of unique globally strong solutions were not addressed here. 
These analytical issues turn on answering the question of whether the modified vortex stretching term in equations \eqref{3Dvort-Leqn} and \eqref{vortstretchcirc} either enhances or diminishes circulation. This is an outstanding mathematical issue that we have been unable to address in this paper. As for other aspects of selective decay, perhaps a statistical mechanics approach would be useful here. 

In combining the NS-alpha model with the present selective decay approach, one may expect existence and uniqueness to persist, since this property is already possessed by the NS-alpha turbulence model, which controls the vortex stretching term for the Navier-Stokes equations. However, the vector $\mathbf{B}$ defined in \eqref{Bvector-def} that appears in the coefficient $[\mathbf{u},\,\mathbf{B}]$ in the modified vortex stretching term appearing in equations \eqref{3Dvort-Leqn} and \eqref{vortstretchcirc} depends on higher-order derivatives associated with the operator $L$. Consequently, the expectation that the NS-alpha should control the modified vortex stretching term may not be fulfilled for the standard $H^1$ regularisation. Of course, the higher Sobolev norm regularisation norms $\|\,\cdot\,\|^2_L$ are also available for use in the NS-alpha formulation. These higher Sobolev norms in combination with the NS-alpha formulation of \cite{FoHoTi2001,FoHoTi2002} may be required to guarantee analytical control of the vortex stretching term for the Casimir-dissipation selective decay models introduced here. 

The Casimir dissipation approach for other fluid models such as magnetohydrodynamics (MHD), 
compressible fluids, two-layer quasi-geostrophic models parameterizing baroclinic instability, 
Maxwell-fluid and Maxwell-Vlasov plasmas, spin chains, solitons, complex fluids, turbulence models, etc., will be investigated elsewhere. 

\subsection*{Acknowledgments} We are grateful to R. Hide, C. Tronci, G. Vallis and B. A. Wingate for fruitful and thoughtful discussions during the course of this work. FGB was partially supported by a ``Projet Incitatif de Recherche'' contract from the Ecole Normale Sup\'erieure de Paris. DDH is grateful for partial support by the European Research Council Advanced Grant 267382 FCCA.

{\footnotesize

\bibliographystyle{new}

\begin{thebibliography}{300}

\bibitem[Arnold(1965a)]{Ar1965a}
V. I. Arnold [1965a], Conditions for nonlinear stability of stationary plane curvilinear flows of ideal fluid. 
\textit{Dokl. Akad. Nauk, SSSR} \textbf{162}, 975--978,
[trans. in \textit{Soviet Math.} \textbf{6}, 773--777].

\bibitem[Arnold(1965b)]{Ar1965b}
V. I. Arnold [1965b], Variational principle for three-dimensional steady-state flows of an ideal fluid. 
\textit{Prikl. Math. I Mech.} \textbf{29}, 846--851 [transl. in \textit{J. Appl. Math. Mech.} \textbf{29}, 1002--1008].

\bibitem[Arnold(1966)]{Ar1966}
V.I. Arnold [1966], Sur la g\'eom\'etrie differentielle des groupes de Lie de dimension infinie et ses applications \`a l'hydrodynamique des fluides parfaits,  
\textit{Ann. Inst. Fourier}  \textbf{16}, 319--361. 

\bibitem[Arnold(1969)]{Ar1969}
V. I. Arnold [1969], The Hamiltonian character of the Euler equation for the
dynamics of a rigid body and an ideal fluid.
\textit{Usp. Mat. Nauk} \textbf{24}(3): 225--26 (In Russian)

\bibitem[Arnold(1978)]{Ar1978}
V. I. Arnold [1978], \textit{Mathematical Methods of Classical Mechanics}, Springer, N.Y.

\bibitem[Arnold and Khesin(1998)]{ArKh1998}
V. I. Arnold and B. Khesin [1998],
\textit{Topological Methods in Hydrodynamics}, Springer, N.Y.

\bibitem[Bloch(2004)]{Bl2004}
A. M. Bloch (with the collaboration of: J. Baillieul, P. E. Crouch and J. E. Marsden) [2004],
\textit{Nonholonomic Mechanics and Control}, Springer, N.Y.

\bibitem[Bloch et al.(1996)]{BKMR1996}
A. Bloch, P.S. Krishnaprasad, J.E. Marsden, T.S. Ratiu [1996],
The Euler--Poincare equations and double bracket dissipation, 
\textit{Comm. Math. Phys.} \textbf{175}, 1--42.

\bibitem[Brockett(1991)]{Br1991}
R. W. Brockett [1991], Dynamical systems that sort lists, diagonalise matrices, and solve linear programming problems, \textit{Linear Algebr. Appl.}, \textbf{146}, 79--91.

\bibitem[Brody, Ellis and Holm(2008)]{BrElHo2008}
D. C. Brody, D. C. P. Ellis and D. D. Holm [2008],
Hamiltonian statistical mechanics.
\textit{J. Phys. A: Math. Theor.} {\bf 41}, 502002.

\bibitem[Chen, Gunzburger, and Ringler(2011)]{ChGuRi2011}
Q. Chen, M. Gunzburger, and T. Ringler [2011], A scale-invariant formulation of the anticipated potential vorticity method, 
\textit{Monthly Weather Review}, \textbf{139}(8), 2614--2629.

\bibitem[Chen, Gunzburger, and Ringler(2012)]{ChGuRi2012}
Q. Chen, M. Gunzburger, and T. Ringler [2012], A scale-aware anticipated potential vorticity method on variable-resolution meshes. Part II: on variable-resolution meshes. \textit{Monthly Weather Review}, \textbf{140} (9), 3127--3133. 

\bibitem[Cotter and Holm(2012)]{CoHo2012}
C. J. Cotter and D. D. Holm [2012],
On Noether's theorem for the Euler-Poincar\'e equation on the diffeomorphism group with advected quantities. To appear in \textit{Found. Comp. Math.}

\bibitem[Craik and Leibovich(1976)]{CrLe1976}
A. D. D. Craik and S. Leibovich [1976], A rational model of Langmuir circulations. {\it J. Fluid Mech.} {\bf 73}, 401--426. 

\bibitem[Ertel(1942)]{Ertel1942}
H. Ertel  [1942] Ein Neuer Hydrodynamischer Wirbelsatz. 
{\it  Meteorol. Z. (Braunschweig)} 
{\bf 59}, 271--281.

\bibitem[Fjortoft(1953)]{Fj1953}
R. Fjortoft [1953],
On the changes in the spectral distribution of kinetic energy for twodimensional, nondivergent flow,
\textit{Tellus}, {\bf 5} 225--230.

\bibitem[Foias, Holm and Titi(2001)]{FoHoTi2001}
C. Foias, D. D. Holm and E. S. Titi  [2001],
The Navier-Stokes-alpha model of fluid turbulence,
{\it Physica D} {\bf 152} 505--519.
Preprint at \url{http://xxx.lanl.gov/abs/nlin.CD/0103037}.

\bibitem[Foias, Holm and Titi(2002)]{FoHoTi2002}
C. Foias, D. D. Holm and E. S. Titi  [2002],
The three dimensional viscous Camassa-Holm equations, and
their relation to the Navier-Stokes equations and turbulence theory.
{\it J. Dyn. and Diff. Eqns.} {\bf 14} (2002) 1-35.
Preprint at \url{http://xxx.lanl.gov/abs/nlin.CD/0103039}.

\bibitem[Gent and McWilliams(1996)]{GeMcW1996}
P. Gent and J. C. McWilliams [1996] 
Eliassen-Palm fluxes and the momentum equation in
non-eddy-resolving ocean circulation models,
{\it J. Phys. Oceanog.} {\bf26}, 2539--2546. 

\bibitem[Gibbon and Holm(2010)]{GiHo2010} 
J. D. Gibbon and D. D. Holm [2010], The dynamics of the gradient of potential vorticity,
{\it J. Phys. A: Math. Theor.} {\bf 43}, 172001.

\bibitem[Gibbon and Holm(2011)]{GiHo2011} 
J. D. Gibbon and D. D. Holm [2011],
The gradient of potential vorticity, quaternions and an orthonormal frame for fluid particles,
{\it Geophys. Astrophys. Fluid Dyn.}, 105: (2 \& 3), 329--339.

\bibitem[Griffa and Salmon(1989)]{GrSa1989}
A. Griffa and R. Salmon [1989], Wind-driven  ocean  circulation  and  equilibrium  statistical mechanics.  {\it J. Mar.  Res.} {\bf 49}, 53--73.

\bibitem[Grmela(1984)]{Gr1984}
M. Grmela [1984], Bracket formulation of dissipative fluid mechanics equations,
\textit{Phys. Lett. A}, \textbf{102}, 355--358.

\bibitem[Haynes and McIntyre(1990)]{HMc90} 
P. Haynes and M. E. McIntyre [1990], On the conservation and impermeability 
theorems for potential vorticity. \textit{J. Atmos. Sci.} \textbf{47}, 2021--2031.

\bibitem[Herring, Kerr and Rotunno(1994)]{HKR1994} 
J. K. Herring, R. M. Kerr and R. Rotunno [1994] Ertel's PV in stratified  turbulence. \textit{J. Atmos. Sci.} {\bf 51}, 35--47.

\bibitem[Holm(1996)]{Ho1996}
D. D. Holm [1996], The ideal Craik--Leibovich equations. {\it Physica D} {\bf 98}, 415--441.

\bibitem[Holm(1999)]{Ho1999}
D. D. Holm [1999], 
Fluctuation effects on 3D Lagrangian mean and Eulerian mean fluid motion, 
{\it Physica D}, {\bf133}, 215--269.

\bibitem[Holm(2011)]{Ho2011-GM1}
D. D. Holm [2011], {\it Geometric mechanics. Part I. Dynamics and symmetry.}
Imperial College Press, London; distributed by World Scientific Publishing Co., Hackensack, N. J., Ch 3.

\bibitem[Holm, Marsden and Ratiu(1998)]{HMR1998}
D. D. Holm, J. E. Marsden and T. S. Ratiu [1998], 
The Euler--Poincar\'e equations and semidirect products.
{\it Adv. in Math.}, {\bf 137} (1998) 1--81.
Preprint at \url{http://xxx.lanl.gov/abs/chao-dyn/9801015}.

\bibitem[HMRW(1985)]{HMRW1985} 
D. D. Holm, J. E. Marsden, T. Ratiu and  A. Weinstein [1985], 
Nonlinear stability of fluid and plasma equilibria, 
{\it Physics Reports} {\bf 123} (1985) 1--116.

\bibitem[Holm, Putkaradze and Tronci(2008)]{HoPuTr2008}
D. D. Holm, V. Putkaradze and C. Tronci [2008], 
Geometric gradient-flow dynamics with singular solutions,
\textit{Physica D}, \textbf{237}, 2952--2965 

\bibitem[Holm and Tronci(2012)]{HoTr2012}
D. D. Holm and C. Tronci [2012], 
Multiscale turbulence models based on convected fluid microstructure.
Preprint at \url{http://arxiv.org/pdf/1203.4545.pdf}

\bibitem[Hoskins, McIntyre and Robertson(1985)]{HMR1985} 
B. J. Hoskins, M. E. McIntyre and A. W. Robertson [1985] On the use \& significance of isentropic potential vorticity maps. 
{\it Quart. J. Roy. Met. Soc.} {\bf 111}, 877--946.

\bibitem[Kaufman(1984)]{Ka1984}
A. N. Kaufman [1984], Dissipative Hamiltonian systems: A unifying principle,
\textit{Phys. Lett. A}, \textbf{100}, 419--422.

\bibitem[Kraichnan(1967)]{Kr1967}
R. H. Kraichnan [1967], Inertial ranges in two-dimensional turbulence. 
\textit{Phys. Fluids} \textbf{10}, 1417--1423.

\bibitem[Kraichnan(1971)]{Kr1971}
R. H. Kraichnan [1971], Inertial-range transfer in two- and three-dimensional turbulence.  \textit{J. Fluid Mech.} \textbf{47}, 525--535.

\bibitem[Kraichnan and Montgomery(1980)]{KrMo1980}
R. H. Kraichnan and D. Montgomery [1980], Two-dimensional turbulence. \textit{Rep. Prog. Phys.} \textbf{43}, 547--619.

\bibitem[Levy, Dubrulle and Chavanis(2006)]{LeDuCh2006}
N. Leprovost, B. Dubrulle and P.-H. Chavanis [2006], Dynamics and thermodynamics of axisymmetric flows: Theory.
\textit{Phys. Rev. E} \textbf{73}, 046308.

\bibitem[Marsden and Ratiu(1994)]{MaRa1994}
J. E. Marsden  and T. S. Ratiu [1994], {\em Introduction to Mechanics and Symmetry}. 
Texts in Applied Mathematics, Vol. 75. New York: Springer.

\bibitem[Marshall and Adcroft(2010)]{MaAd2010}
D. P. Marshall and A. J. Adcroft [2010], 
Parameterization of ocean eddies: Potential vorticity mixing, energetics and ArnoldÕs first stability theorem.
\textit{Ocean Modelling} \textbf{32}, 188--204.

\bibitem[Matthaeus and Montgomery(1980)]{MaMo1980}
W. H. Matthaeus and D. Montgomery [1980],
Selective decay hypothesis at high mechanical and magnetic Reynolds numbers, 
\textit{Ann. N. Y. Acad. Sci.}, \textbf{357}, 203--222.

\bibitem[Mininni, Pouquet and Sullivan(2008)]{MiPoSu2008}
P. Mininni, A. Pouquet and P. Sullivan [2008],
Two examples from geophysical and astrophysical turbulence on modeling
disparate scale interactions.
Appeared in 
{\it Computational Methods for the Atmosphere and the Ocean} (a special volume of the {\it Handbook of Numerical Analysis}), 
edited by R. Temam and J. Tribbia. pp. 333--377.
Academic Press Professional, Inc.  San Diego, CA, USA.

\bibitem[Morrison(1984)]{Mo1984}
P. J. Morrison [1984], Bracket formulation for irreversible classical fields, 
\textit{Phys. Lett. A}, \textbf{100}, 423--427.

\bibitem[Nambu(1973)]{Nambu1973} 
Y. Nambu [1973],
Generalized Hamiltonian mechanics, 
\textit{Phys. Rev. D} {\bf7}, 2405--2412.

\bibitem[Olver(2000)]{Ol2000}
P. J. Olver [2000],
\textit{Applications of Lie groups to Differential Equations},
Springer, N. Y.

\bibitem[\"Ottinger(2005)]{Ot2005}
H. C. \"Ottinger [2005], \textit{Beyond Equilibrium Thermodynamics}, Wiley-Interscience, N. Y.

\bibitem[Sadourny and Basdevant(1981)]{SaBa1981}
R. Sadourny and C. Basdevant [1981], Une classe d'op\'erateurs adapt\'es \`a
Ia mod\'elisation de Ia diffusion turbulente en dimension deux,
\textit{C. R. Acad. Sci. Paris}, \textbf{292}, 1061--1064.

\bibitem[Sadourny and Basdevant(1985)]{SaBa1985}
R. Sadourny and C. Basdevant [1985], Parametrization of subgrid scale barotropic and baroclinic eddies in quasi-geostrophic models: anticipated potential vorticity method,  \textit{J. Atm. Sci.}, \textbf{42}(13), 1353--1363.

\bibitem[Salmon(2005)]{Sa2005}
R. Salmon [2005], A general method for conserving quantities related to
potential vorticity in numerical models,
\textit{Nonlinearity} \textbf{18} R1ÐR16.

\bibitem[Vallis, Carnevale, and Young(1989)]{VaCaYo1989}
G.~K. Vallis, G.~F. Carnevale and W.~R. Young [1989], Extremal energy properties and construction of stable solutions of the Euler equations, 
\textit{J. Fluid Mech.} \textbf{207}, 133--152.

\bibitem[Vallis and Hua(1988)]{VaHu1988}
G.~K. Vallis and B.-L. Hua [1988], Eddy viscosity of the anticipated potential vorticity method, \textit{J. Atm. Sci.}, \textbf{45}(4), 617--627.

\bibitem[Verstappen(2008)]{Ve2008}
R. Verstappen, On restraining the production of small scales of motion in a turbulent channel flow. \textit{Computers \& Fluids} {\bf 37} 887--897.

\bibitem[Weller(2012)]{We2012}
H. Weller [2012], Controlling the computational modes of the arbitrarily structured C grid. \textit{Monthly Weather Review}, \textbf{140} (10), 3220--3234.

\end{thebibliography}
\addcontentsline{toc}{section}{References}

\end{document}